\begin{document}
\setcounter{page}{0001}
\issue{XXI~(2020)}

\title{Polynomial representation of general partial Boolean functions with a single quantum query}

\address{School of Computer Science and Engineering, Sun Yat-sen University, Guangzhou 510006, China\\}

\author{Xu Guoliang\\ 
 College of Information Technology, \\ Luoyang Normal University, Luoyang 471934, China; \\
   Institute of Quantum Computing  and Computer Theory, School of Computer Science and Engineering, \\
   Sun Yat-sen University, Guangzhou 510006, China.
   xu1guo2liang@foxmail.com
\and Qiu Daowen \thanks{ Corresponding author. This work is supported in part by the National Natural Science Foundation of China (Nos.61572532, 61876195, 62272208), the Natural Science Foundation of Guangdong Province of China
(No.2017B030311011), the Natural Science Foundation of Henan (No.232300420426), the Science and Technology Innovation Team of Henan University (No.22IRTSTHN016), the Key Scientific Research Project of Higher Education of Henan Province (Nos.23A520013, 22A110014), the Guangxi Key Laboratory of Trusted Software (Grant No.KX202040).}\\
Institute of Quantum Computing and Computer Theory,  School of  Computer Science and Engineering, \\ Sun Yat-sen University,  Guangzhou 510006, China \\
 Corresponding author address: issqdw@mail.sysu.edu.cn }\maketitle

\runninghead{G. Xu, et al.}{Polynomial representation of general partial Boolean functions with a single quantum query}

\begin{abstract}
Early in 1992,
Deutsch-Jozsa algorithm computed a symmetric partial Boolean function with a single quantum query, and thus achieved the best separation between classical deterministic and exact quantum query complexity.
Until recent years,
it was clarified that all
symmetric partial Boolean functions with a single quantum query can be computed exactly by Deutsch-Jozsa algorithm. For the general partial Boolean functions with a single quantum query, the latest characterizations is complex and not very satisfactory.
Based on this, this paper proves and discovers three new results:
(1) Establishing a new equivalence, each partial Boolean function with a single quantum query can be transformed to a simple partial Boolean function whose polynomial degree is just one;
(2) For partial Boolean functions up to four bits, there are only 10 non-trivial partial Boolean functions with a single quantum query;
(3) For each quantum 1-query algorithm with undefined measurement, there exists a constructive method for finding out all partial Boolean functions that can be computed exactly by the algorithm.
Essentially, the first discovery represent a step forward for a fundamental conclusion that the polynomial degree of partial Boolean functions with a single quantum query is one or two, and the last two results contribute a way for searching more nontrival partial Boolean functions that have quantum advantages.
\end{abstract}

\begin{keywords}
Quantum computation, Quantum query algorithm, Partial Boolean function, Exact quantum 1-query algorithm, Exact quantum query complexity
\end{keywords}

\section{Introduction}
\label{Intro}

In quantum computation, quantum query model (i.e., quantum black box model or quantum decision tree model) \cite{Buhrman2002Complexity,Nielson2000Quantum} captures most of the known quantum algorithms, such as Deutsch's algorithm \cite{Deutsch1985quantumtheory}, Deutsch-Jozsa algorithm \cite{Deutsch1992rapidsolution}, Shor's factoring algorithm \cite{Shor1994Algorithms}, Grover's unstructured search algorithm \cite{Grover1996Afast}, HHL algorithm for linear systems of equations \cite{HHL2009algorithm} and many others \cite{quantumalgorithmzoo}. As the first quantum (query) algorithm, Deutsch's algorithm offers an essential theoretical framework for the construction of subsequent famous algorithms, and can be used as a subroutine for developing new large-scale quantum query algorithms \cite{exactdewolf2021}.
Certainly, in recent years, these algorithms still attract the attention of researchers \cite{Qiu2016revisit1,Qiuxu2021SPIN,quantumdistributed2021}.

Quantum query model computes a Boolean function  $f(x_{1},\cdots,x_{n})$, by means of accessing variables $x_{i}$ via quantum queries. Naturally, the complexity of an algorithm measures the number of quantum queries that it costs. For all legal inputs, an exact quantum query algorithm always can output correct function values \cite{Buhrman2002Complexity}, while a bounded-error quantum algorithm outputs function values with probabilities greater than $\frac{2}{3}$.
Based on this, the exact (or bounded-error) quantum query complexity $Q_{E}(f)$ (or $Q(f)$) of a Boolean function $f$ denotes the number of queries of an optimal quantum query algorithm that computes $f$ exactly (or with bounded-error) \cite{Buhrman2002Complexity}. Over the past decade, the equivalence relationships between the bounded-error quantum query model and polynomials had been characterized by Refs.~\cite{Aaronson2016ccc,Arunachalam2019siam}, the breadth and depth quantum advantage of exact quantum query algorithms had been determined by Refs.~\cite{Ambainis2015qic,ambainis2016siamjc}, and the exact quantum query complexity of some Boolean functions had been investigated by Refs.~\cite{exactdewolf2021,Montanaro2015On,Ambainis2017Exact,Qiu2018Generalized,He2018Exact,Qiu2016character,Grillo2018Quantum}, and so on. Although, the quantum query complexity is still a fundamental research direction, and needs more thorough and comprehensive research \cite{Aaronsonopen2021,Belovsquery2020,Liquery2019}.

The exact quantum algorithms are something not very well studied and worth studying.
For general partial Boolean functions, this paper works on the fundamental case: Exact quantum 1-query algorithms or exact quantum algorithm with a single query.
For completeness, in Section \ref{preli}, both definitions and relations of total, symmetric partial and partial Boolean functions are classified.
If a non-constant Boolean function $f$ can be computed exactly by a quantum 1-query algorithm, then we say that the exact quantum query complexity of $f$ is one, i.e., $Q_{E}(f)=1$.
There have been some work in this direction previously.
Initially, Deutsch's algorithm and Deutsch-Jozsa algorithm are exact quantum 1-query algorithms.
In 2007, Montanaro \cite{Montanaro2007thesis} considered a problem of exact oracle identification with a single quantum query.
In 2015, Montanaro et al. \cite{Montanaro2015On} clarified all small total Boolean functions up to four bits and all symmetric total Boolean functions up to six bits.
Being aware that Deutsch-Jozsa problem can be equivalently described as a symmetric partial Boolean function, Qiu et al. \cite{Qiu2016revisit1,Qiu2016character} in 2016 found all symmetric partial Boolean functions with $Q_{E}(f)=1$, and showed that these functions can be computed exactly by Deutsch-Jozsa algorithm \cite{Deutsch1992rapidsolution}.
In 2020, Mukherjee et al. \cite{Mukherjee2020Classical} pointed out that all total Boolean functions with $Q_{E}(f)=1$ were implied in \cite{Montanaro2015On}.
With a simple argument, Qiu et al. \cite{Qiuxu2021SPIN} in 2021 proved that all symmetric partial Boolean functions with $Q_{E}(f)=1$ include all total Boolean functions with $Q_{E}(f)=1$.
As a result, all symmetric partial Boolean functions and total Boolean functions with $Q_{E}(f)=1$ can be computed exactly by Deutsch-Jozsa algorithm.
However, for all partial Boolean functions with $Q_{E}(f)=1$, things are quite complicated.
In 2016, Grillo et al. \cite{Grillo2018Quantum} established a theoretical linkage between a system of equations and the exact quantum $t$-query algorithm.
In 2020, Xu et al. \cite{Xuqiu2020partial} provided two sufficient and necessary conditions to characterize any $n$-bit partial Boolean function with $Q_{E}(f)=1$,
and showed that the number of all $n$-bit partial Boolean functions with $Q_{E}(f)=1$ is upper bounded by a function in $n$.
Recently, Qiu et al. \cite{Qiuxu2021SPIN} constructed a new partial Boolean function with $Q_{E}(f)=1$ that cannot be computed exactly by Deutsch-Jozsa algorithm.
As a consequence, the form of all partial Boolean functions with $Q_{E}(f)=1$ is very interesting but still not clear.

The main observation of this paper is that the properties of the function $f$ depend on the set $S = \{a\oplus b:f(a)=0~\textrm{and}~f(b)=1\}$, where $\oplus$ is bit-wise XOR. Hence, we can study an equivalent function $g_{f}$, defined by $g_{f}(x) = 1$ for all $x\in S$ and $g_{f}(0)=0$. Both function are simultaneously either computable or non-computable by exact quantum 1-query algorithm. For $g_{f}$, it is very easy to describe the corresponding polynomial. In particular, $g_{f}$ is essentially a real linear function. As above, it simplifies the analysis of such functions dramatically.



\subsection{Our results}
This paper works on partial Boolean functions computed by exact quantum 1-query algorithms. Specifically, we achieve the following results.
\begin{itemize}
  \item[(1)] Establishing a new equivalence relation, we show that all partial Boolean functions with $Q_{E}(f)=1$ can be transformed to a simple form. In the simple form, the partial Boolean functions output 0 if and only if the input is $00\cdots 0$. The related results are presented in Corollaries \ref{Reductionlaw1} and \ref{Reductionlaw2}. Based on this, we obtain Theorem \ref{Reduceddecisiontheorem} and Corollary \ref{aresult} which can improve the performance of Theorem 1 in Ref.~\cite{Xuqiu2020partial}.
  \item[(2)] Up to the new equivalence, we show that all transformed partial Boolean functions with $Q_{E}(f)=1$ can be represented by degree-1 multilinear polynomials. This result is presented in Theorem \ref{acorollaryresult}. Essentially, the result represents a step forward for a fundamental fact in Ref. \cite{Buhrman2002Complexity} that the polynomial degree of all partial Boolean functions with $Q_{E}(f)=1$ is one or two.
  \item[(3)] As an application, all 3-bit and 4-bit transformed partial Boolean functions with $Q_{E}(f)=1$ are checked by the degree-1 multilinear polynomials of Theorem \ref{acorollaryresult} (see Tables \ref{Tab03} and \ref{Tab04}). The details are presented in Corollaries \ref{3bitpart1query} and \ref{4bitpart1query}. These results imply that there are only 10 new non-trivial transformed partial Boolean functions with $Q_{E}(f)=1$. In contrast, Montanaro et al. \cite{Montanaro2015On} investigated numerically all small total Boolean functions up to four bits and all symmetric total Boolean functions up to six bits.
  \item[(4)] Based on the representation, we introduce a construction method (i.e., Theorem \ref{Constructiontheorem}) for finding out all transformed partial Boolean functions computed by a given exact quantum 1-query algorithm. In particular, we find out all partial Boolean functions computed by Deutsch-Jozsa algorithm, and the result shows that these functions can be transformed to a symmetric partial Boolean function with $Q_{E}(f)=1$.
\end{itemize}

\subsection{Organisation}
The remainder of the paper is organized as follows. In Section \ref{preli}, we recall  basic notations and related results that we need in this paper. Then, in Section \ref{theoremXu} we transform all partial Boolean functions with $Q_{E}(f)=1$ to a simple form. After that, in Section \ref{quantum1querycomplexity}, we prove a representation theorem of transformed partial Boolean functions with $Q_{E}(f)=1$. In addition, we introduces a construction theorem in Section \ref{intereexam}. Finally, the discussion and conclusion are presented in Section \ref{Concl}. For the sake of brevity and readability, some results are put in Appendixes.

\section{Preliminaries}\label{preli}

In this section, we introduce related notations and recall  basic properties concerning partial Boolean functions and  exact quantum query model. For the details, we can refer to
Refs.\cite{Buhrman2002Complexity,Nielson2000Quantum,Qiuxu2021SPIN,Qiu2016character}.

\subsection{Boolean functions and polynomials}\label{boolandpol}

In this paper, we mainly concern $n$-bit partial Boolean functions  $f: D\rightarrow \{0,1\}$, where $D\subseteq\{0,1\}^{n}$ \cite{Qiuxu2021SPIN}.
In particular, if $D=\{0,1\}^{n}$, then $f$ is called a total Boolean function.
For any input $x=x_{1}x_{2}\cdots x_{n}\in D$, the Hamming weight (number of 1s) of $x$ is denoted by $\lvert x \rvert$. Symmetric partial Boolean functions \cite{Qiu2016revisit1,Qiu2018Generalized,Qiu2016character} are partial Boolean functions $f: D\rightarrow\{0,1\}$ satisfying the following two conditions: (1) $f(x)=f(y)$ for all $\lvert x\rvert=\lvert y\rvert$ where $x,y\in D$; (2) If $\lvert x\rvert=\lvert y\rvert$, then $x\in D$ if and only if $y\in D$ (Here, one may feel a bit confused why give such a strong definition for symmetric function. For example, Deutsch-Jozsa algorithm works even if the we are distinguishing $\{0^n\}$ v.s. a subset of $\{x : |x| = n/2\}$. In fact, the weaker symmetric function is not more difficult than Deutsch-Jozsa problem. In other word, for general symmetric partial Boolean functions, one may find out a better algorithm to solve a weaker symmetric function. However, for symmetric functions with a single quantum query, one can not find out a better algorithm unless the weaker symmetric function is a constant). Therefore, the relation of partial, symmetric partial and total Boolean functions can be figured in Fig. \ref{figure1}.
\begin{figure}[h]%
\centering
\includegraphics[width=0.7\textwidth]{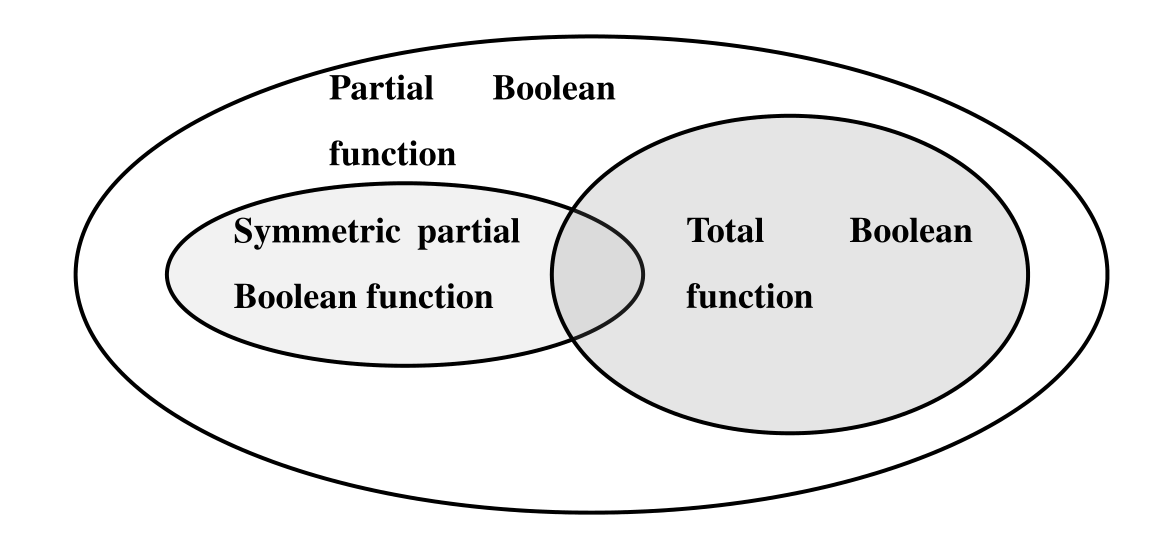}
\caption{The relation of partial, symmetric partial and total Boolean functions.}\label{figure1}
\end{figure}

In general, an $n$-bit partial Boolean function $f: D\rightarrow \{0,1\}$ can be given by a $2^{n}$-dimensional vector
$(f(0),f(1),f(2),\cdots,f(2^{n}-1))^{T}$
whose entry $f(x)$ is $*$ for any undefined input $x\in\{0,1\}^{n}\backslash D$.
For example, the Boolean function $f$ computed by Deutsch's algorithm \cite{Deutsch1985quantumtheory} can be given by $(f(00),f(01),$\\
$f(10),f(11))=(1,0,0,1)$.
Sometimes, we use a two-tuple
$(\{a:f(a)=0\},\{b:f(b)=1\})$
to give a partial Boolean function $f: D\rightarrow\{0,1\}$.
For example, the even $n$-bit partial Boolean function $f$ computed by Deutsch-Jozsa algorithm \cite{Deutsch1992rapidsolution} can be given by $(\{a:\lvert a \rvert=0,n\},\{b:\lvert b \rvert=\frac{n}{2}\})$.

To represent a partial Boolean function, we use two monomials $X_{S}=\prod_{i\in S}x_{i}$ and $(-1)^{S\cdot x}=\prod_{i\in S}(-1)^{x_{i}}$ where $X_{\emptyset}=(-1)^{\emptyset\cdot x}=1$ \cite{Buhrman2002Complexity}.
In general, the set $\{X_{S}:S\subseteq\{1,2,\cdots,n\}\}$ is  called the polynomial basis and the set $\{(-1)^{S\cdot x}:S\subseteq\{1,2,\cdots,n\}\}$ is called Fourier basis \cite{Buhrman2002Complexity}.
If a function $p:\mathbb{R}^{n}\rightarrow \mathbb{C}$ can be written as $\Sigma_{S}\alpha_{S}X_{S}$ for some complex numbers $\alpha_{S}$, then the function $p$ is called a multilinear polynomial \cite{Buhrman2002Complexity}.
For any partial Boolean function $f:D\rightarrow \{0,1\}$, a multilinear polynomial $p(x)$ represents $f$ if and only if $p(x)=f(x)$ for all $x\in D$ \cite{Buhrman2002Complexity,NisanOn}.
Unlike total functions $f:\{0,1\}^{n}\rightarrow  \{0,1\}$, the multilinear representation of a partial (not total) function $f:D\rightarrow \{0,1\}$ may be not unique \cite{Qiuxu2021SPIN}.

For any finite set $S$, the notation $\lvert S\rvert$ denotes the number of elements in $S$.
For a complex matrix $A$, $A^{T}$ is the transpose of the matrix $A$, and $A^{\dag}=(A^{*})^{T}$ is the conjugate transpose of the matrix $A$. Obviously, $A^{\dag}=A^{T}$ holds for any real matrix $A$.
In addition, the notation $\lvert a\rangle$ is usually used to denote a quantum state which is a unit vector and $\langle a \rvert=(\lvert a\rangle)^{\dag}$ is a row vector.

\subsection{Quantum query model and 1-query algorithm}\label{querymodelandalgori}

In quantum query model, for every input $x\in D$, the quantum black box $O_{x}$ can be described as a unitary operator which is defined by
\begin{equation}\label{actionofox}
O_{x}\lvert i,j\rangle=\left\{
\begin{array}{lcl}
(-1)^{x_{i}}\lvert i,j\rangle, & & \textrm{if}~~i\in\{1,2,\cdots,n\},\\
\lvert 0,j\rangle, & & \textrm{if}~~i=0.
\end{array} \right.
\end{equation}
Here, the integer number $i\in\{0,1,2,\cdots,n\}$ is the query-part and the label $j$ is the other-part.
Then, a quantum $t$-query algorithm is determined by an initial state $\lvert\psi_{0}\rangle$ and a sequence of unitary transformations $U_{0},$ $O_{x},$ $U_{1},$ $O_{x},$ $\cdots,$ $O_{x},$ $U_{t}$ followed by a measurement, where $t+1$ unitary operators $U_{0},U_{1},\cdots,U_{t}$ are independent of the input \cite{Buhrman2002Complexity,Nielson2000Quantum}.

This paper concerns the quantum 1-query algorithm which is determined by an initial state $\lvert\psi_{0}\rangle$ and three unitary transformations $U_{0},$ $O_{x},$ $U_{1}$ followed by a measurement. In Fig. \ref{figure2}, a circuit of quantum 1-query algorithms is described.
\begin{figure}[h]%
\centering
\includegraphics[width=0.9\textwidth]{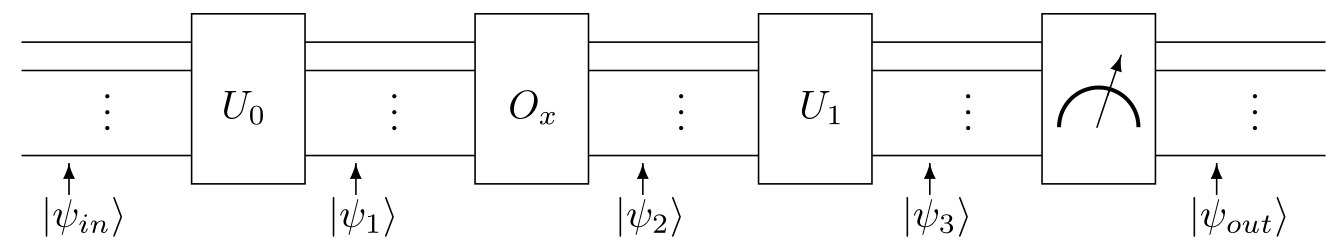}
\caption{A circuit of quantum 1-query algorithms.}\label{figure2}
\end{figure}

Assume that before making the quantum query, the algorithm prepares a state $|\psi\rangle$. If after a single quantum query, the 1 instances can be distinguished from the 0 instances perfectly, it has to be the case, that $O_x|\psi\rangle$ form two orthogonal subspaces spanned by the 1 instances and the 0 instances, where $O_x$ is the query oracle that puts phases based on the ``address" register. Open up everything, it means that for some unit vector $\vec{a} = 0$ for any 1 instance $x$ and 0 instance $y$. In other words, for some non-negative real vector $z$ with unit $l$-1 norm, such that $\vec{a}=0$. Using this fact, observe that for any such function (computable exactly by 1-query quantum algorithm), the set $S = \{a\oplus b:f(a)=0~\textrm{and}~f(b)=1\}$ determines an equivalence class. And the representative function for each equivalence class $S$, is such that $g_{f}(0)=0$, $g_{f}(x)=1$ for $x\in S$. Further, any representative function can be described exactly by a degree-1 polynomial alluded in the previous paragraph.

Based on these observations, to determine all such functions, it boils down to first determine all possible $S$ and the corresponding representative function that admits an exact 1-query algorithm. This is to decide whether there exists a non-negative real vector $z$ satisfying the linear constraints mentioned previously. The authors thus list all such $S$ for the case when $n=3,4$. (the case when $n=1,2$ is understood). The details are in the following.

\section{Equivalent transformation of partial Boolean functions}\label{theoremXu}
In this section, we establish a new equivalence relation on partial Boolean functions with $Q_{E}(f)=1$ and then give more characterizations of them.

First, it is necessary to recall Theorem 1 of Ref. \cite{Xuqiu2020partial}. For convenience to describe, we define the vector
\begin{equation}\label{}
\phi(a)=(1,(-1)^{a_{1}},(-1)^{a_{2}},\cdots,(-1)^{a_{n}})^{T}
\end{equation}
which corresponds uniquely to an input string $a=a_{1}a_{2}\cdots a_{n}\in\{0,1\}^{n}$.
\begin{theorem}[Decision theorem]\label{Decisiontheorem}
\cite{Xuqiu2020partial}. An $n$-bit non-constant partial Boolean function $f:$ $D\rightarrow$ $\{0,1\}$ is computed by an exact quantum 1-query  algorithm, if and only if there exist at least one non-negative vector $\mathbf{z}=$($z_{0},$$z_{1},$$z_{2},$$\cdots,$$z_{n})^{T}$ such that the linear system of equations
\begin{equation}\label{solve}
\begin{split}
\left \{
 \begin{aligned}
&\mathbf{z}^{T}\phi(0)=1, &\\
&\mathbf{z}^{T}\phi(a\oplus b)=0, & \forall a, b\in D~\textrm{and}~f(a)\neq f(b).
 \end{aligned}
\right.
\end{split}
\end{equation}
holds.
\end{theorem}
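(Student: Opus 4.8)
\medskip
\noindent\textbf{Proof proposal.}\quad
The plan is to route the problem through the Gram matrix of the post-query states of the algorithm, and to show that exactness is equivalent to pairwise orthogonality of those states across the two value-classes of $f$, which then unpacks directly into the linear system~(\ref{solve}). First I would normalise the algorithm: the state produced just before the single query, $U_{0}|\psi_{0}\rangle$, is a fixed unit vector independent of $x$; write it as $|\phi\rangle=\sum_{i=0}^{n}|i\rangle\otimes|\phi_{i}\rangle$ with $i$ the query-part index and $|\phi_{i}\rangle$ the (unnormalised) other-part components. By~(\ref{actionofox}) the post-query state on input $x\in\{0,1\}^{n}$ is
\begin{equation}
|\psi_{x}\rangle\;=\;O_{x}|\phi\rangle\;=\;|0\rangle\otimes|\phi_{0}\rangle+\sum_{i=1}^{n}(-1)^{x_{i}}\,|i\rangle\otimes|\phi_{i}\rangle.
\end{equation}
Applying $U_{1}$ and measuring in the computational basis, with the outcomes partitioned into an ``output-$0$'' set and an ``output-$1$'' set, realises a two-outcome projective measurement $\{\Pi_{0},\Pi_{1}\}$ with $\Pi_{0}+\Pi_{1}=I$; conversely any such measurement arises this way. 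Since $|\psi_{x}\rangle$ is a unit vector, the algorithm returns $f(x)$ with certainty on input $x$ iff $\Pi_{f(x)}|\psi_{x}\rangle=|\psi_{x}\rangle$, i.e.\ iff $|\psi_{x}\rangle\in\mathrm{range}(\Pi_{f(x)})$. Consequently the algorithm computes $f$ exactly iff $\langle\psi_{a}|\psi_{b}\rangle=0$ for all $a,b\in D$ with $f(a)\neq f(b)$: the ``only if'' is immediate from orthogonality of $\mathrm{range}(\Pi_{0})$ and $\mathrm{range}(\Pi_{1})$, and for the ``if'' one takes $\Pi_{0}$ to be the orthogonal projection onto $\mathrm{span}\{|\psi_{a}\rangle:a\in D,\,f(a)=0\}$.

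The second step is the computation of these inner products. From the displayed form of $|\psi_{x}\rangle$ and orthonormality of the $|i\rangle$,
\begin{equation}
\langle\psi_{a}|\psi_{b}\rangle\;=\;\langle\phi_{0}|\phi_{0}\rangle+\sum_{i=1}^{n}(-1)^{a_{i}+b_{i}}\,\langle\phi_{i}|\phi_{i}\rangle.
\end{equation}
Put $z_{i}=\langle\phi_{i}|\phi_{i}\rangle\ge 0$. Then $\mathbf{z}=(z_{0},\dots,z_{n})^{T}$ is non-negative, $\mathbf{z}^{T}\mathbf{F}(0)=\sum_{i=0}^{n}z_{i}=\langle\phi|\phi\rangle=1$, and, since $(-1)^{a_{i}+b_{i}}=(-1)^{a_{i}\oplus b_{i}}$, one has $\langle\psi_{a}|\psi_{b}\rangle=\mathbf{z}^{T}\mathbf{F}(a\oplus b)$. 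Combined with the first step, this gives the forward implication: an exact quantum 1-query algorithm for $f$ yields a non-negative $\mathbf{z}$ satisfying~(\ref{solve}).

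For the converse I would synthesise the algorithm from a given non-negative solution $\mathbf{z}$ of~(\ref{solve}). Work in the $(n+1)$-dimensional space spanned by $|0\rangle,\dots,|n\rangle$ (trivial other-part) and set $|\phi\rangle=\sum_{i=0}^{n}\sqrt{z_{i}}\,|i\rangle$; here the non-negativity of $\mathbf{z}$ is exactly what allows the square roots, and $\mathbf{z}^{T}\mathbf{F}(0)=1$ makes $|\phi\rangle$ a unit vector. Then $|\psi_{x}\rangle=O_{x}|\phi\rangle$ obeys $\langle\psi_{a}|\psi_{b}\rangle=\mathbf{z}^{T}\mathbf{F}(a\oplus b)$ as above, so the subspaces $V_{0}=\mathrm{span}\{|\psi_{a}\rangle:a\in D,f(a)=0\}$ and $V_{1}=\mathrm{span}\{|\psi_{b}\rangle:b\in D,f(b)=1\}$ are orthogonal. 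Choose any unitary $U_{1}$ carrying $V_{0}$ into the span of a set $S_{0}$ of computational-basis states and $V_{1}$ into the span of the complementary set $S_{1}$ (possible since $V_{0}\perp V_{1}$ sit inside an $(n+1)$-dimensional space); with $|\psi_{0}\rangle=|\phi\rangle$, $U_{0}=I$, and the convention that an outcome in $S_{0}$ means output $0$ and an outcome in $S_{1}$ means output $1$, we obtain an exact quantum 1-query algorithm for $f$. (That $f$ is non-constant plays no role here: a constant $f$ is trivially computed and then~(\ref{solve}) is vacuously solvable.)

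The main obstacle is the first step: recognising that exactness of a 1-query algorithm collapses precisely to pairwise orthogonality of the post-query states across $f^{-1}(0)$ and $f^{-1}(1)$, and bookkeeping the fact that the arbitrary-dimensional other-part register enters only through the non-negative scalars $z_{i}=\langle\phi_{i}|\phi_{i}\rangle$. Once that reduction is in place, both directions are short, the only remaining delicate point being the use of non-negativity of $\mathbf{z}$ in passing from the linear system back to an actual quantum state.
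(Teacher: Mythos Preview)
Your argument is correct and is essentially the standard proof of this result. Note, however, that the present paper does \emph{not} prove Theorem~\ref{Decisiontheorem}: it is quoted verbatim from~\cite{Xuqiu2020partial} (``First we recall Theorem~1 in~\cite{Xuqiu2020partial}\ldots''), so there is no in-paper proof to compare against. What you have written is precisely the argument one expects in the cited source: reduce exactness to orthogonality of the post-query states $O_{x}U_{0}|\psi_{0}\rangle$ across the two value-classes (using that $U_{1}$ is unitary and the final measurement is projective), then observe that the Gram entries depend on the pre-query state only through the non-negative weights $z_{i}=\|\phi_{i}\|^{2}$ and equal $\mathbf{z}^{T}\mathbf{F}(a\oplus b)$; the converse rebuilds the state from $\sqrt{z_{i}}$ and picks $U_{1}$ to separate the two orthogonal subspaces. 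All steps are sound, including the dimension count needed to realise $U_{1}$ as a unitary mapping $V_{0}$ and $V_{1}$ into disjoint coordinate subspaces of $\mathbb{C}^{n+1}$.
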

In fact, Theorem \ref{Decisiontheorem} establishes a relationship between the linear system of equations and an exact quantum 1-query algorithm by Eq.~\eqref{solve}. Here, $z_{i}=\sum_{j}\lvert\alpha_{i,j}\rvert^{2}$ holds for $U_{0}\lvert\psi\rangle=\sum_{i,j}\alpha_{i,j}\lvert i,j\rangle$ (i.e., the state before the unique query of an exact quantum 1-query algorithm).
In fact, if a quantum 1-query algorithm in this class and a Boolean function $f$ can make Eq.~\eqref{solve} hold, then it is not difficult to verify that the algorithm can compute $f$ exactly.

By Theorem \ref{Decisiontheorem}, we get the following two transformation laws that will contribute to define a new equivalence relation.
\begin{corollary}[Transformation law 1]\label{Reductionlaw1}
If an $n$-bit non-constant partial Boolean function $f:$ $D\rightarrow$ $\{0,1\}$ can be computed  by an exact quantum 1-query  algorithm, then for any $S\subseteq\{a\oplus b:f(a)\neq f(b)\}$, the function $g_{f}$ defined by
\begin{equation}\label{}
g_{f}(x)=\left\{
\begin{array}{lcl}
1, & & x\in S,\\
0, & & x=00...0
\end{array} \right.
\end{equation}
can also be computed  by the same exact quantum 1-query  algorithm.
\end{corollary}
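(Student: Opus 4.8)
The plan is to invoke the Decision theorem (Theorem~\ref{Decisiontheorem}) twice: once to extract a certificate vector $\mathbf{z}$ from the hypothesis on $f$, and once to feed the \emph{same} $\mathbf{z}$ back in for $g_f$. First, since $f$ is non-constant and is computed by an exact quantum 1-query algorithm, Theorem~\ref{Decisiontheorem} supplies a non-negative vector $\mathbf{z}=(z_0,z_1,\dots,z_n)^T$ with $\mathbf{z}^T\mathbf{F}(0)=1$ and $\mathbf{z}^T\mathbf{F}(a\oplus b)=0$ for all $a,b\in D$ with $f(a)\neq f(b)$. Because the proof of Theorem~\ref{Decisiontheorem} constructs an exact quantum 1-query algorithm out of any admissible $\mathbf{z}$, this $\mathbf{z}$ is exactly the data of the algorithm in the statement.

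Next I would write down the linear system that Theorem~\ref{Decisiontheorem} attaches to $g_f$. Since $f(a)\neq f(b)$ forces $a\neq b$, every element of $\{a\oplus b:f(a)\neq f(b)\}$ is nonzero, so $00\cdots0\notin S$; hence $g_f$ is a well-defined partial Boolean function on the domain $S\cup\{00\cdots0\}$, and it is non-constant precisely when $S\neq\emptyset$ (the degenerate case $S=\emptyset$ being trivial). Its pairs of differently-valued inputs are exactly the pairs $(s,00\cdots0)$ with $s\in S$, whose XOR is $s$, so the system Theorem~\ref{Decisiontheorem} associates to $g_f$ is $\mathbf{z}'^T\mathbf{F}(0)=1$ together with $\mathbf{z}'^T\mathbf{F}(s)=0$ for every $s\in S$.

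It then suffices to check that $\mathbf{z}'=\mathbf{z}$ solves this system. The equation $\mathbf{z}^T\mathbf{F}(0)=1$ is already one of the equations for $f$. For each $s\in S$ there exist, by definition of $S$, inputs $a,b\in D$ with $f(a)\neq f(b)$ and $a\oplus b=s$, so the corresponding equation for $f$ reads exactly $\mathbf{z}^T\mathbf{F}(s)=0$. As $\mathbf{z}$ is non-negative, the converse direction of Theorem~\ref{Decisiontheorem} applies and yields that $g_f$ is computed by an exact quantum 1-query algorithm; since that algorithm is the one constructed from $\mathbf{z}$, it coincides with the algorithm computing $f$. I do not expect a real obstacle here: the core of the argument is the one-line observation that the linear system for $g_f$ is a sub-system of the system for $f$. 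The only things needing care are the bookkeeping that $g_f$ introduces no equations beyond $\{\mathbf{z}^T\mathbf{F}(s)=0:s\in S\}$, the remark $00\cdots0\notin S$ that makes $g_f$ well defined, and reading ``the same algorithm'' through the correspondence between algorithms and certificates $\mathbf{z}$ furnished by Theorem~\ref{Decisiontheorem}.
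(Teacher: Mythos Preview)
Your proposal is correct and follows essentially the same route as the paper: apply Theorem~\ref{Decisiontheorem} to $f$ to obtain a non-negative $\mathbf{z}$, observe that the linear system attached to $g_f$ consists of the normalization equation together with $\mathbf{z}^T\mathbf{F}(s)=0$ for $s\in S$, and note this is a sub-system of the one for $f$, so the same $\mathbf{z}$ works. Your write-up is in fact more careful than the paper's, explicitly checking that $00\cdots0\notin S$, identifying the differently-valued pairs of $g_f$, and tracing the ``same algorithm'' claim back through the $\mathbf{z}$--algorithm correspondence.
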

\begin{proof}
By Theorem \ref{Decisiontheorem}, if $f:$ $D\rightarrow$ $\{0,1\}$ can be computed by an exact quantum 1-query algorithm, then the linear system of equations $z_{0}+$$z_{1}+$$z_{2}+$$\cdots+$$z_{n}$=1 and $\mathbf{z}^{T}\phi(a\oplus b)=0$ for all $a\in\{x:f(x)=0\}$ and $b\in\{x:f(x)=1\}$ has one non-negative solution. Meanwhile, the non-negative solution is also a solution of the linear system of equations $z_{0}+$$z_{1}+$$z_{2}+$$\cdots+$$z_{n}$=1 and $\mathbf{z}^{T}\phi(a\oplus b)=0$ for all $a\in\{x:g(x)=0\}$ and $b\in\{x:g(x)=1\}$. Thus, the corollary has been proved.
\end{proof}
For $S=\{a\oplus b:f(a)\neq f(b)\}$, the inverse direction of Corollary \ref{Reductionlaw1} also holds, since the two functions share a common linear system of equations.
\begin{corollary}[Transformation law 2]\label{Reductionlaw2}
An $n$-bit non-constant partial Boolean function $f:$ $D\rightarrow$ $\{0,1\}$ can be computed  by an exact quantum 1-query  algorithm, if and only if the function $g_{f}$ defined by
\begin{equation}\label{}
g_{f}(x)=\left\{
\begin{array}{lcl}
1, & & x\in \{a\oplus b:f(a)\neq f(b)\},\\
0, & & x=00...0
\end{array} \right.
\end{equation}
can be computed by an exact quantum 1-query  algorithm.
\end{corollary}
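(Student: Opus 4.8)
The plan is to instantiate Reduction law 1 at the \emph{full} difference set $S=\{a\oplus b : a,b\in D,\ f(a)\neq f(b)\}$ and then read off both directions of the biconditional from the Decision theorem (Theorem \ref{Decisiontheorem}).

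First I would record two structural facts about $g_f$. Since $f(a)\neq f(b)$ forces $a\neq b$, the all-zero string $00\cdots 0$ never lies in $S$, so $g_f$ is a well-defined partial Boolean function on the disjoint union $\{00\cdots 0\}\cup S$; and since $f$ is non-constant, $S\neq\emptyset$, so $g_f$ attains both values $0$ and $1$ and is itself non-constant, which means Theorem \ref{Decisiontheorem} applies to $g_f$ as well as to $f$. The forward implication is then immediate: if $f$ is computed by an exact quantum 1-query algorithm, applying Corollary \ref{Reductionlaw1} with this choice of $S$ shows that $g_f$ is computed by the same algorithm, in particular by an exact quantum 1-query algorithm.

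For the converse I would check, as the remark following Corollary \ref{Reductionlaw1} already indicates, that $f$ and $g_f$ give rise to identical linear systems in Theorem \ref{Decisiontheorem}. Because $g_f$ takes the value $0$ only at $00\cdots 0$ and the value $1$ precisely on $S$, its set of relevant differences $\{a'\oplus b' : a',b'\in\{00\cdots 0\}\cup S,\ g_f(a')\neq g_f(b')\}$ is exactly $\{00\cdots 0\oplus s : s\in S\}=S$; hence for both $f$ and $g_f$ the system consists of $\mathbf{z}^{T}\mathbf{F}(0)=1$ together with $\mathbf{z}^{T}\mathbf{F}(s)=0$ for all $s\in S$. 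Any non-negative solution of the $g_f$-system is therefore a non-negative solution of the $f$-system, and since $f$ is non-constant the ``only if'' direction of Theorem \ref{Decisiontheorem} yields that $f$ is computed by an exact quantum 1-query algorithm. The whole argument is short; the only point requiring attention is the bookkeeping of this paragraph — verifying that passing between $f$ and $g_f$ neither enlarges nor shrinks the difference set $S$ and preserves non-constancy and well-definedness — rather than any substantive obstacle.
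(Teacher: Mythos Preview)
Your proposal is correct and follows essentially the same route as the paper: the forward direction is Corollary~\ref{Reductionlaw1} at $S=\{a\oplus b:f(a)\neq f(b)\}$, and the converse is the observation (which the paper states in the sentence preceding Corollary~\ref{Reductionlaw2}) that $f$ and $g_f$ share a common linear system in Theorem~\ref{Decisiontheorem}. Your version is more carefully written---the explicit verification that the difference set of $g_f$ is again $S$, and the checks that $g_f$ is well-defined and non-constant, are details the paper leaves implicit.
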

In Discrete mathematics, a Binary relation on a nonempty set is an equivalence relation, if the relation is reflexive, symmetric, and transitive.
In the following, we define a Binary relation on the set of all partial Boolean functions $f:D\rightarrow\{0,1\}$ where $D\subseteq\{0,1\}^{n}$ and the three properties (i.e., reflexivity, symmetry and transitivity) can be verified by Corollaries \ref{Reductionlaw1} and \ref{Reductionlaw2}.
\begin{definition}\label{def111000}
For two partial Boolean functions $f$ and $g$, if each of them can be computed by an exact quantum 1-query algorithm and the set $\{a\oplus b:f(a)\neq f(b)\}=\{a\oplus b:g(a)\neq g(b)\}$, then we say that $f$ is equivalent to $g$.
\end{definition}

\begin{remark}\label{reductremark}
By Corollary \ref{Reductionlaw2}, any partial Boolean function with $Q_{E}(f)=1$ can always be transformed to a partial Boolean function which outputs 0 if and only if the input is $00\cdots0$.
That is, for the exact quantum 1-query model, it is enough to investigate all partial Boolean functions of the form
\begin{equation}\label{simpleooooo}
f(x)=\left\{
\begin{array}{lcl}
1~\textrm{or}~*, & & x\neq 00...0,\\
0, & & x=00...0.
\end{array} \right.
\end{equation}
Thus, we always discuss the set $\{a:f(a)=1\}$ for a partial Boolean function with $Q_{E}(f)=1$.
\end{remark}

Applying Theorem \ref{Decisiontheorem} to Eq.~\eqref{simpleooooo}, the following transformed version of Theorem \ref{Decisiontheorem} is obtained. In addition, a related result is put into \ref{arelatedresulttheorem}.
\begin{theorem}[Transformed decision theorem]\label{Reduceddecisiontheorem} Let $f$ be an $n$-bit non-constant partial Boolean function whose value is 0 if and only if $x=00...0$. Then, $f$ can be computed  by an exact quantum 1-query  algorithm, if and only if there exist at least one non-negative solution $\mathbf{z}=(z_{1},z_{2},\cdots,z_{n})^{T}$ of the linear system of equations
\begin{equation}\label{}
\begin{split}
\left \{
 \begin{aligned}
&z_{1}+z_{2}+\cdots+z_{n}\leq1, &\\
&\sum_{i:x_{i}=1}z_{i}=\frac{1}{2}, & \forall x\in\{a:f(a)=1\}.
 \end{aligned}
\right.
\end{split}
\end{equation}
\end{theorem}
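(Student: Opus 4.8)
The plan is to derive this directly from the Decision theorem (Theorem~\ref{Decisiontheorem}) by specializing it to the normal form~\eqref{simpleooooo} and then eliminating the auxiliary variable $z_{0}$. First I would observe that when $f(x)=0$ holds exactly for $x=00\cdots0$, the only pairs $a,b\in D$ with $f(a)\neq f(b)$ are those in which one of $a,b$ equals $00\cdots0$ and the other lies in $\{a:f(a)=1\}$; hence $a\oplus b$ ranges precisely over $\{a:f(a)=1\}$, and the constraint family $\mathbf{z}^{T}\mathbf{F}(a\oplus b)=0$ of Theorem~\ref{Decisiontheorem} collapses to $\mathbf{z}^{T}\mathbf{F}(x)=0$ for all $x$ with $f(x)=1$. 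Since $f$ is non-constant and $f(00\cdots0)=0$, this set is nonempty, so the two systems really are comparable.

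Next I would unfold $\mathbf{z}^{T}\mathbf{F}(x)$ using $\mathbf{F}(x)=(1,(-1)^{x_{1}},\ldots,(-1)^{x_{n}})^{T}$. Writing $\sigma=z_{1}+\cdots+z_{n}$ and splitting the coordinate sum according to whether $x_{i}=0$ or $x_{i}=1$, one gets
\[
\mathbf{z}^{T}\mathbf{F}(x)=z_{0}+\sum_{i:x_{i}=0}z_{i}-\sum_{i:x_{i}=1}z_{i}=z_{0}+\sigma-2\sum_{i:x_{i}=1}z_{i}.
\]
The normalization equation $\mathbf{z}^{T}\mathbf{F}(00\cdots0)=1$ in Theorem~\ref{Decisiontheorem} is exactly $z_{0}+\sigma=1$, so substituting this in shows that $\mathbf{z}^{T}\mathbf{F}(x)=0$ is equivalent to $\sum_{i:x_{i}=1}z_{i}=\tfrac{1}{2}$, which is the second line of the claimed system.

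Finally I would handle the passage between the $(n+1)$-dimensional vector $(z_{0},z_{1},\ldots,z_{n})^{T}$ of the Decision theorem and the $n$-dimensional vector $(z_{1},\ldots,z_{n})^{T}$ of the present statement. Given a non-negative solution of the former, the relation $z_{0}=1-\sigma\geq0$ forces $\sigma\leq1$, so $(z_{1},\ldots,z_{n})^{T}$ is a non-negative solution of the reduced system. Conversely, given a non-negative $(z_{1},\ldots,z_{n})^{T}$ satisfying $\sigma\leq1$ and all the half-sum equations, I would set $z_{0}:=1-\sigma\geq0$; then $(z_{0},z_{1},\ldots,z_{n})^{T}$ is non-negative, obeys $z_{0}+\sigma=1$, and by the displayed identity satisfies $\mathbf{z}^{T}\mathbf{F}(x)=0$ for every $x$ with $f(x)=1$, hence solves the system of Theorem~\ref{Decisiontheorem}. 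Applying Theorem~\ref{Decisiontheorem} in both directions then yields the equivalence. The argument is essentially bookkeeping; the one point deserving a moment's care is that eliminating $z_{0}$ must not silently drop its non-negativity constraint, which is precisely why the inequality $z_{1}+\cdots+z_{n}\leq1$ appears in place of an equality.
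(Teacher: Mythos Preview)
Your proposal is correct and follows essentially the same approach as the paper: specialize Theorem~\ref{Decisiontheorem} to the case $\{a:f(a)=0\}=\{00\cdots0\}$, then subtract the equation $\mathbf{z}^{T}\mathbf{F}(x)=0$ from the normalization $z_{0}+z_{1}+\cdots+z_{n}=1$ to obtain $\sum_{i:x_{i}=1}z_{i}=\tfrac12$. The paper's proof is extremely terse and leaves the elimination of $z_{0}$ (and the resulting inequality $z_{1}+\cdots+z_{n}\leq1$) implicit; your version spells out this bookkeeping explicitly, which is an improvement in clarity but not a different argument.
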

\begin{proof}
Replacing the set $\{a:f(a)=0\}$ in Theorem \ref{Decisiontheorem} with $\{00\cdots0\}$, we get the linear system of equations $z_{0}+z_{1}+z_{2}+\cdots+z_{n}=1$ and $\mathbf{z}^{T}\phi(b)=0$ for all $b\in\{x:f(x)=1\}$.
Subtracting $\mathbf{z}^{T}\phi(b)=z_{0}+\sum_{i:b_{i}=0}z_{i}-\sum_{i:b_{i}=1}z_{i}$
from $z_{0}+z_{1}+\cdots+z_{n}=z_{0}+\sum_{i:b_{i}=0}z_{i}+\sum_{i:b_{i}=1}z_{i},$ the proof has been finished.
\end{proof}

Finally, combining above results with previous works \cite{Deutsch1992rapidsolution,Qiu2016revisit1,Qiuxu2021SPIN,Qiu2016character,Xuqiu2020partial}, the following $n$-bit partial Boolean functions are known.
\begin{itemize}
\item [(1)] For an even $n$, $f(x)=1$ if and only if $\lvert x\rvert=\frac{n}{2}$ \cite{Deutsch1992rapidsolution}. This is the well-known Deutsch-Jozsa problem.
\item [(2)] Given an integer $c\in\{\lceil\frac{n}{2}\rceil,\cdots,n\}$, $f(x)=1$ if and only if $\lvert x\rvert=c$ \cite{Qiu2016revisit1,Qiu2016character}. Padding $2c-n$ zeros to all legal inputs, $f$ can be transformed to Deutsch-Jozsa problem.
\end{itemize}
In other word, we only known a few partial Boolean functions with $Q_{E}(f)=1$. However, an upper bound on the number of $n$-bit partial Boolean functions with $Q_{E}(f)=1$ is big \cite{Xuqiu2020partial}. These results motivate us to find some new non-trivial examples which meet the following two conditions.
\begin{itemize}
\item [(1)] The exact quantum 1-query  algorithm of computing the function depends on $n$ bits, i.e., $z_{1}z_{2}\cdots z_{n}>0$. In fact, if there exists a solution such that some of $z_{1},z_{2},\cdots,z_{n}$ are zero, then corresponding bits can be removed and the function is identified with a smaller Boolean function depending on all bits.
\item [(2)] The partial Boolean function is not a symmetric partial Boolean function and can not be computed by Deutsch-Jozsa algorithm.
\end{itemize}

\section{Representation of partial Boolean functions}\label{quantum1querycomplexity}
As in Corollary \ref{Reductionlaw2}, any partial Boolean function with $Q_{E}(f)=1$ can always be transformed to a partial Boolean function which outputs 0 if and only if the input is $00\cdots0$. In this section we introduce the following result which is applicable to the transformed partial Boolean functions with $Q_{E}(f)=1$.
\begin{theorem}[Representation theorem]\label{acorollaryresult}
For a partial Boolean function $f$ in the form of Eq.~\eqref{simpleooooo}, $f$ can be computed  by an exact quantum 1-query  algorithm if and only if $f$ can be represented by the multilinear polynomial
\begin{equation}\label{fcanbepresen00000}
p(x)=2a_{1}x_{1}+2a_{2}x_{2}+\cdots +2a_{n}x_{n}
\end{equation}
for some non-negative real numbers $a_{1}, a_{2}, \cdots, a_{n}$ satisfying $\sum^{n}_{i=1}a_{i}\leq 1$.
\end{theorem}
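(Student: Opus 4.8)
The plan is to obtain Theorem~\ref{acorollaryresult} as an essentially immediate translation of the Reduced decision theorem (Theorem~\ref{Reduceddecisiontheorem}). I would identify the polynomial coefficients $a_1,\dots,a_n$ with the components $z_1,\dots,z_n$ of a non-negative feasible solution of the linear system there; the factor $2$ in $p(x)=2a_1x_1+\cdots+2a_nx_n$ is exactly what converts the value $\tfrac12$ appearing on the right-hand side of that system into the Boolean value $1$. Since a reduced $f$ always has $f(0\cdots0)=0$, the only constant case is the function whose domain is $\{0\cdots0\}$; for it both directions hold with $a_1=\cdots=a_n=0$ and $p\equiv 0$, so I would dispose of it first and afterwards assume $\{a:f(a)=1\}\neq\emptyset$, which makes $f$ non-constant and legitimises invoking Theorem~\ref{Reduceddecisiontheorem}.

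For the forward direction I would start from a non-negative $\mathbf{z}=(z_1,\dots,z_n)^{T}$ with $z_1+\cdots+z_n\le 1$ and $\sum_{i:x_i=1}z_i=\tfrac12$ for all $x$ with $f(x)=1$, set $a_i:=z_i$, and verify that $p$ represents $f$ on $D$: at the input $0\cdots0$ one has $p(0\cdots0)=0=f(0\cdots0)$, while at any $x$ with $f(x)=1$ one has $p(x)=2\sum_{i:x_i=1}z_i=1=f(x)$; non-negativity of the $a_i$ and the bound $\sum_i a_i\le 1$ are inherited verbatim. For the converse I would run the same identification backwards: given a representing polynomial $p(x)=2a_1x_1+\cdots+2a_nx_n$ with $a_i\ge 0$ and $\sum_i a_i\le 1$, put $z_i:=a_i$; the defining property $p(x)=f(x)$ for $x\in D$ forces $2\sum_{i:x_i=1}z_i=1$, i.e.\ $\sum_{i:x_i=1}z_i=\tfrac12$, whenever $f(x)=1$, and, together with $z_i\ge 0$ and $\sum_i z_i\le 1$, this is precisely a non-negative solution of the system in Theorem~\ref{Reduceddecisiontheorem}, so $f$ is computed by an exact quantum 1-query algorithm.

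The step I would watch most carefully is not a computation but the reading of the word ``represented'': ``$f$ can be represented by $p$'' has to mean that there exists such a $p$ agreeing with $f$ on $D$ (the multilinear representation of a partial function need not be unique, as recalled in Section~\ref{preli}), and nothing is claimed about $p$ off $D$ --- this is what keeps the statement both true and non-vacuous. Two further points deserve care: first, the whole theorem rests on the prior reduction to the form of Eq.~\eqref{simpleooooo} guaranteed by Corollary~\ref{Reductionlaw2}, and it is only after that reduction that the degree-1 shape $p(x)=\sum_i 2a_ix_i$ is available, whereas a general, un-reduced partial Boolean function with exact quantum 1-query complexity is only guaranteed a representing polynomial of degree at most $2$; second, I would re-check the bookkeeping by which the ``query-free'' weight $z_0\ge 0$ is dropped and the equality $\sum_{i=0}^{n}z_i=1$ becomes the inequality $\sum_{i=1}^{n}z_i\le 1$, which is exactly the manipulation already performed inside the proof of Theorem~\ref{Reduceddecisiontheorem}.
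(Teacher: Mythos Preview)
Your proposal is correct and follows essentially the same approach as the paper: both directions are obtained by identifying the polynomial coefficients $a_i$ with the solution components $z_i$ of the Reduced decision theorem (Theorem~\ref{Reduceddecisiontheorem}), with the factor $2$ converting $\tfrac12$ to the Boolean value $1$. Your explicit handling of the constant case (domain $\{0\cdots 0\}$) is a small extra care the paper leaves implicit, but otherwise the arguments coincide.
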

\begin{proof}
$\Rightarrow)$. By Theorem \ref{Reduceddecisiontheorem}, if $f$ can be computed by an exact quantum 1-query algorithm, then there exists at least one non-negative solution $\mathbf{z}=(z_{1},z_{2},\cdots,z_{n})^{T}$ of the linear system of equations
\begin{equation}\label{11111111111111}
\begin{split}
\left \{
 \begin{aligned}
&z_{1}+z_{2}+\cdots+z_{n}\leq1, &\\
&\sum_{i}z_{i}x_{i}=\frac{1}{2}, & \forall x\in\{b:f(b)=1\}.
 \end{aligned}
\right.
\end{split}
\end{equation}
For any solution $(z_{1},z_{2},\cdots,z_{n})^{T}$, without loss of generality, let $a_{1}=z_{1}, a_{2}=z_{2}, \cdots, a_{n}=z_{n}$.
By Eq.~\eqref{11111111111111},
\begin{equation}\label{fcanbepresent}
\begin{split}
p(x)=\sum^{n}_{i=1}2a_{i}x_{i}=2\sum_{i}z_{i}x_{i}=1=f(x)
\end{split}
\end{equation}
holds for all $x\in\{b:f(b)=1\}$. Meanwhile, $p(x)=0$ for $x=00\cdots 0$. Thus, $p(x)$ represents $f$.

$\Leftarrow)$. If a non-constant partial Boolean function $f$ can be represented by Eq.~\eqref{fcanbepresen00000}
for some non-negative real numbers $a_{1}, a_{2}, \cdots, a_{p}$ satisfying $\sum^{n}_{i=1}a_{i}\leq 1$, then
\begin{equation}\label{functiondeutschJ000000000}
f(x)=\left\{
\begin{array}{lcl}
0~\textrm{or}~*, & & \sum^{n}_{i=1}2a_{i}x_{i}=0,\\
1~\textrm{or}~*, & & \sum^{n}_{i=1}2a_{i}x_{i}=1,\\
*, & & \sum^{n}_{i=1}2a_{i}x_{i}\not\in\{0,1\}.
\end{array} \right.
\end{equation}
Since $a_{i}\geq 0$, we have $\sum^{n}_{i=1}2a_{i}x_{i}=0$ if and only if $x=00\cdots0$. Considering that $f$ is non-constant,
\begin{equation}\label{functiondeutschJ0000000}
f(x)=\left\{
\begin{array}{lcl}
0, & & x=00\cdots0,\\
1~\textrm{or}~*, & & \sum^{n}_{i=1}a_{i}x_{i}=\frac{1}{2},\\
*, & & \sum^{n}_{i=1}2a_{i}x_{i}\not\in\{0,1\}.
\end{array} \right.
\end{equation}
Take $z_{1}=a_{1},$ $z_{2}=a_{2},$ $\cdots,$ $z_{n}=a_{n}$, and the proof is finished by Theorem \ref{Reduceddecisiontheorem}.
\end{proof}
\begin{remark}\label{degreemultilinear}
Take the (total) 2-bit parity function (i.e., Deutsch's problem)
\begin{equation}\label{}
(f(00),f(01),f(10),f(11))=(0,1,1,0)
\end{equation}
as an example. As we know that there does not exist a degree-1 multilinear polynomial representing $f$. Now, we show that $f$ is equivalent to a partial Boolean function with degree one. First, since $Q_{E}(f)=1$ and $\{a\oplus b: f(a)\neq f(b)\}=\{01,10\}$, $f$ is equivalent to
\begin{equation}\label{}
g_{f}(x)=\left\{
\begin{array}{lcl}
1, & & x\in \{01,10\},\\
0, & & x=00,\\
*, & & x=11
\end{array} \right.
\end{equation}
by Definition \ref{def111000}. Obviously, the equation $p(x)=x_{1}+x_{2}=g_{f}(x)$ holds for all $x\in\{00,01,10\}$. Thus, $p$ represents $g_{f}$ and the polynomial degree of $g_{f}$ is one.
As a result, $f$ is equivalent to the partial Boolean function $g_{f}$ with degree one.
Furthermore, for $p(x)$, we can see that the coefficients $2a_{1}=2a_{2}=1$ (in Theorem \ref{acorollaryresult}) implies $a_{1}+a_{2}=1\leq 1$.

In the inverse direction of Theorem \ref{acorollaryresult}, the condition that $a_{1}, a_{2}, \cdots, a_{n}$ are non-negative real numbers is necessary. Take the three-bit partial Boolean function
\begin{equation}\label{}
f(x)=\left\{
\begin{array}{lcl}
1, & & x\in\{001,010,111\},\\
0, & & x=000,\\
*, & & \textrm{else}
\end{array} \right.
\end{equation}
as an example. It is not difficult to know that $f$ has a unique degree 1 multilinear polynomial representation
$p(x)=-x_{1}+x_{2}+x_{3}.$
Here, the coefficient -1 is less than 0. Meanwhile, the exact quantum query complexity of $f$ is bigger than 1 by Theorem \ref{Reduceddecisiontheorem}.
\end{remark}

Finally, we check transformed 3-bit and 4-bit partial Boolean functions with $Q_{E}(f)=1$ in \ref{Applicationofacorollaryresult}, and the result is enumerated in Tables \ref{Tab03} and \ref{Tab04}.

\section{Construction of partial Boolean functions}\label{intereexam}

This section presents a construction for finding out new non-trivial examples.
We start by giving the following definition.
\begin{definition}\label{def000}
A quantum 1-query algorithm is called a $(z_{1},z_{2},\cdots,z_{n})$ quantum 1-query algorithm, if the state $U_{0}\lvert\psi\rangle=\sum_{i,j}\alpha_{i,j}\lvert i,j\rangle$ satisfies $z_{i}=\sum_{j}\lvert\alpha_{i,j}\rvert^{2}$ for all $i\in\{1,2,\cdots,n\}$.
\end{definition}

In Definition \ref{def000},
\begin{equation}\label{}
1\geq\sum^{n}_{i=1}z_{i}\geq \sum^{n}_{i:x_{i}=1}z_{i}=\frac{1}{2}
\end{equation}
can be got by Theorem \ref{Reduceddecisiontheorem}. Without loss of generality, assume that $z_{1}=z_{2}=\cdots=z_{k_{1}}=a_{1}, z_{k_{1}+1}=z_{k_{1}+2}=\cdots=z_{k_{2}}=a_{2}, \cdots, z_{k_{p-1}+1}=z_{k_{p-1}+2}=\cdots=z_{k_{p}}=a_{p}$. Now, we introduce the construction theorem of partial Boolean functions with $Q_{E}(f)=1$.
\begin{theorem}[Construction theorem]\label{Constructiontheorem}
Given $a_{1}, a_{2}, \cdots, a_{p}\in \mathbb{R}_{+}$ 
and $0=k_{0}<k_{1}<k_{2}<\cdots<k_{p}=n\in \mathbb{N}_{+}$ with $a_{1}k_{1}+a_{2}(k_{2}-k_{1})+\cdots +a_{p}(k_{p}-k_{p-1})\in[\frac{1}{2},1]$, then the function
\begin{equation}\label{functionconstruct}
f(x)=\left\{
\begin{array}{lcl}
1, & & x\in T,\\
0, & & x=00...0,\\
*, & & \textrm{else}
\end{array} \right.
\end{equation}
is an $n$-bit partial Boolean function with $Q_{E}(f)=1$ where the set $T$ is
\begin{equation}\label{allpart}
\{x:a_{1}\lvert x_{1}x_{2}\cdots x_{k_{1}}\rvert+\cdots+a_{p}\lvert x_{k_{p-1}+1}x_{k_{p-1}+2}\cdots x_{k_{p}}\rvert=\frac{1}{2}\}.
\end{equation}
\end{theorem}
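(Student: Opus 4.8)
The plan is to deduce the Construction theorem directly from the Representation theorem (Theorem \ref{acorollaryresult}), since the function in Eq.~\eqref{functionconstruct}--\eqref{allpart} is manifestly a reduced partial Boolean function (it outputs $0$ exactly when $x=00\cdots0$). So it suffices to exhibit a degree-$1$ multilinear polynomial of the required shape that represents $f$ on its domain.

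First I would write down the candidate polynomial. Using the block structure $z_1=\cdots=z_{k_1}=a_1$, $z_{k_1+1}=\cdots=z_{k_2}=a_2$, \dots, $z_{k_{p-1}+1}=\cdots=z_{k_p}=a_p$, set
\begin{equation}\label{candpoly}
p(x)=\sum_{i=1}^{n}2z_i x_i=2a_1\lvert x_1\cdots x_{k_1}\rvert+2a_2\lvert x_{k_1+1}\cdots x_{k_2}\rvert+\cdots+2a_p\lvert x_{k_{p-1}+1}\cdots x_{k_p}\rvert,
\end{equation}
where I have used that $\sum_{i\in\text{block }\ell}x_i$ is exactly the Hamming weight of the corresponding substring. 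The coefficients $2z_i$ are non-negative because each $a_\ell\in\mathbb{R}_+$, and $\sum_{i=1}^n z_i = a_1k_1+a_2(k_2-k_1)+\cdots+a_p(k_p-k_{p-1})\le 1$ by the hypothesis $a_1k_1+\cdots+a_p(k_p-k_{p-1})\in[\frac12,1]$. Hence $p$ has precisely the form required by Theorem \ref{acorollaryresult}.

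Next I would check that $p$ represents $f$ on $D=\{00\cdots0\}\cup T$. For $x\in T$, by the definition of $T$ in Eq.~\eqref{allpart} we have $a_1\lvert x_1\cdots x_{k_1}\rvert+\cdots+a_p\lvert x_{k_{p-1}+1}\cdots x_{k_p}\rvert=\frac12$, so $p(x)=2\cdot\frac12=1=f(x)$. For $x=00\cdots0$, every term of \eqref{candpoly} vanishes, so $p(x)=0=f(x)$. Thus $p(x)=f(x)$ for all $x\in D$, i.e.\ $p$ represents $f$. One should also note $f$ is non-constant: $T$ is non-empty because the hypothesis $a_1k_1+\cdots+a_p(k_p-k_{p-1})\ge\frac12$ together with non-negativity of the $a_\ell$ guarantees one can choose a subset of the coordinates whose weighted count equals $\frac12$ (e.g.\ this is exactly what makes the linear system of Theorem \ref{Reduceddecisiontheorem} consistent), so $D$ properly contains $\{00\cdots0\}$. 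Having verified the hypotheses of Theorem \ref{acorollaryresult}, its ``$\Leftarrow$'' direction yields that $f$ is computed by an exact quantum $1$-query algorithm, completing the proof.

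The only genuinely delicate point is the non-emptiness of $T$ (equivalently, that $f$ is non-constant so that the Representation theorem applies): one must argue that $\frac12$ is actually attained as a weighted block-count. This follows since $\frac12$ lies between $0$ and $a_1k_1+\cdots+a_p(k_p-k_{p-1})$, but if integrality obstructions make the exact value $\frac12$ unreachable for a particular choice of $a_\ell$'s and $k_\ell$'s, then $T=\emptyset$ and the statement should be read as vacuous (or the $a_\ell$ are implicitly assumed chosen so that $T\neq\emptyset$, which is the intended regime coming from Definition \ref{def000}); I would flag this explicitly rather than let it pass silently. Everything else is a routine substitution into Theorem \ref{acorollaryresult}.
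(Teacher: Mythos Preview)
Your argument is correct and essentially mirrors the paper's: both set $z_{k_{\ell-1}+1}=\cdots=z_{k_\ell}=a_\ell$, verify $\sum_i z_i=\sum_\ell a_\ell(k_\ell-k_{\ell-1})\le 1$, and observe that for $x\in T$ the weighted sum over the ones equals $\tfrac12$. The only cosmetic difference is that the paper invokes the Reduced decision theorem (Theorem~\ref{Reduceddecisiontheorem}) directly, whereas you route through the Representation theorem (Theorem~\ref{acorollaryresult}), which is itself an immediate repackaging of Theorem~\ref{Reduceddecisiontheorem}; the substantive verification is identical. Your explicit flag about the possible emptiness of $T$ is a point the paper's proof does not raise but is a fair caveat.
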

\begin{proof}
Let $z_{1}=z_{2}=\cdots=z_{k_{1}}=a_{1},$ $z_{k_{1}+1}=z_{k_{1}+2}=\cdots=z_{k_{2}}=a_{2},$ $\cdots,$ $z_{k_{p-1}+1}=z_{k_{p-1}+2}=\cdots=z_{k_{p}}=a_{p}.$
Then, the linear system in Theorem \ref{Reduceddecisiontheorem} becomes
\begin{equation}\label{linearredu}
\begin{split}
\left \{
 \begin{aligned}
&z_{1}+z_{2}+\cdots+z_{n}=\sum^{p}_{i=1}a_{i}(k_{i}-k_{i-1})\leq1, &\\
&a_{1}\lvert x_{1}x_{2}\cdots x_{k_{1}}\rvert+\cdots+a_{p}\lvert x_{k_{p-1}+1}x_{k_{p-1}+2}\cdots x_{k_{p}}\rvert=\frac{1}{2}. &
 \end{aligned}
\right.
\end{split}
\end{equation}
Also,  Eq.~\eqref{linearredu} has the same solutions as
\begin{equation}\label{oureq}
\begin{split}
\left \{
 \begin{aligned}
&a_{1}m_{1}+a_{2}m_{2}+\cdots +a_{p}m_{p}=\frac{1}{2}, &\\
&m_{i}\in\{0,1,\cdots,k_{i}-k_{i-1}\},\forall i\in\{1,2,\cdots,p\} &\\
 \end{aligned}
\right.
\end{split}
\end{equation}
where $m_{1}=\lvert x_{1}x_{2}\cdots x_{k_{1}}\rvert$, $\cdots,$ $m_{p}=\lvert x_{k_{p-1}+1}x_{k_{p-1}+2}\cdots x_{k_{p}}\rvert$. Since every $(z_{1},z_{2},\cdots,z_{n})$ quantum 1-query algorithm computes exactly a partial Boolean function if and only if Eq.~\eqref{linearredu} holds, Eq.~\eqref{allpart} traverses all possible inputs of the partial Boolean function. Thus, the proof is completed.
\end{proof}
Note that the solution space of Eq.~\eqref{oureq} is $\{0,1,\cdots,k_{1}\}\times \{0,1,\cdots,k_{2}\}\times\cdots\times\{0,1,\cdots,k_{p}\}$ whose size is $(k_{1}+1)(k_{2}-k_{1}+1)\cdots(k_{p}-k_{p-1}+1)$. If $p$ is not big (For Deutsch-Jozsa algorithm, $p=1$, $a_{1}=\frac{1}{n}$ and $k_{1}=n$), Theorem \ref{Constructiontheorem} is quite practical. For $p=n$, the size of the solution space of Eq.~\eqref{oureq} is $2^{n}$ and Theorem \ref{Constructiontheorem} degenerates to Theorem \ref{Reduceddecisiontheorem}.

As above, given a quantum 1-query algorithm, Theorem \ref{Constructiontheorem} answers the following two questions.
\begin{itemize}
\item [(1)]~Can it be used to compute some partial Boolean functions exactly? This can be done by checking Eq.~\eqref{oureq}.
\item [(2)]~If so, what partial Boolean functions can be computed exactly by this quantum 1-query algorithm? All partial Boolean functions can be found out by the set in Eq.~\eqref{allpart}.
\end{itemize}
Thus, Theorem \ref{Constructiontheorem} characterizes the computational power of a given exact quantum 1-query algorithm for partial Boolean functions.

Finally, two applications of Theorem \ref{Constructiontheorem} are put into \ref{Applicationoftheconstructionmethod}.

%

\section{Discussion and conclusion}\label{Concl}

This paper studies the interesting problem of characterizing general partial functions with exact quantum query complexity equal to one. Previously, it was shown that symmetric partial Boolean functions and total Boolean functions with $Q_E(f)=1$ can be exactly computed by the Deutsch-Josza algorithm \cite{Qiu2016revisit1,Qiu2016character}, while for the case of general partial Boolean functions such a concrete result is not known. We, in our previous work \cite{Xuqiu2020partial}, characterized the partial functions with quantum query complexity equal to one using a system of linear equations. The present work extends the results provided in Ref. \cite{Xuqiu2020partial} by a simpler class of partial functions.
In particular, this paper proves that every partial function which can be computed by an exact quantum 1-query algorithm can be converted to a simple partial function which is zero only at the all zero input. The paper provides the characterization of these simple partial function in terms of a simplified system of linear equations (as compared with [Xu et. al, 2021]). It is then proved that the simple partial Boolean functions have a degree one multilinear polynomial representation, improving the upper bound of degree $\leq 2Q_E(f)=2$.
Using the aforementioned polynomial representation, this paper then provides a way to construct partial functions with $Q_E(f) = 1$ starting with a sequence of numbers having specified properties. The major contributions of this paper is that it simplifies the aforementioned problem with the help of theorems introduced in Ref. \cite{Xuqiu2020partial}.
Furthermore, the proofs that have been provided are mathematically correct and are easy to follow.
Specifically, the following two observations are quite attractive.

On one hand, the new and important equivalence can be used to investigate all partial Boolean functions with $Q_{E}(f)=1$. Undoubtedly, when we search for a new non-trivial partial Boolean function with $Q_{E}(f)=1$, the new equivalence helps us reduce the size of the search space.
For example, since a 4-bit partial Boolean function corresponds to a vector $(f(0),f(1),\cdots,f(2^{4}-1))\in\{0,1,*\}^{2^{4}}$ which has $2^{4}$ undetermined entries, the size of the search space reaches $3^{16}=43046721$
when searching for a new non-trivial 4-bit partial Boolean function with $Q_{E}(f)=1$.
By means of the new equivalence, any transformed 4-bit partial Boolean function corresponds to a vector $(f(0),f(1),\cdots,f(2^{4}-1))\in0\{1,*\}^{2^{4}-1}$ which has $2^{4}-1$ undetermined entries and the size $3^{2^{4}}$ degenerates to $2^{2^{4}-1}=2^{15}=32768$. It is clear that the size of the search space is reduced at least exponentially.

On the other hand, the new and simple representation is attractive and important from the following three aspects.
\begin{itemize}
\item [(1)]~The representation discovers a new notion of the multilinear polynomial degree that equals the exact quantum query complexity. In contrast, Arunachalam et al. \cite{Arunachalam2019siam} in 2017 obtained a new notion of approximate polynomial degree that equals the bounded-error quantum query complexity.
\item [(2)]~The representation breaks through the best result that the polynomial degree of all partial Boolean functions with $Q_{E}(f)=1$ is one or two. Indeed, the best result can be obtained by Theorem 17 of Ref.\cite{Buhrman2002Complexity}(i.e., the polynomial degree of $f$ is not bigger than $2Q_{E}(f)$). Although the number of all transformed 4-bit partial Boolean functions is still large, the task of checking all reduced $3$-bit and $4$-bit partial Boolean functions is completed well in a few pages.
\item [(3)]~The representation demonstrates a new equivalence relation between exact quantum algorithms and polynomials: the existence of an exact quantum 1-query algorithm computing a transformed partial Boolean function $f$ is equivalent to the existence of a degree-1 polynomial $p$ that represents $f$. Correspondingly, de Wolf \cite{WolfNondeterministic2003} in 2003 proved the equivalence relationship between nondeterministic quantum algorithms and nondeterministic polynomials, Montanaro et al. \cite{Montanaro2011Unbounded} in 2011 proved the equivalence relationship between unbounded-error quantum algorithms and threshold polynomials, and Aaronson et al. \cite{Aaronson2016ccc} in 2016 proved the equivalence relationship between bounded-error quantum 1-query algorithms and representations by degree-2 polynomials.
\end{itemize}

Finally, we have proposed a construction method which can be used to find out some new non-trivial examples and discover the computational power of Deutsch-Jozsa algorithm. In some ways, the construction method paves a way for finding out more Boolean functions that have quantum advantages. In contrast, the existing Boolean functions that have quantum advantages were not proposed constructively and thus difficult to be employed.

As above, the generalization of these discoveries can be considered and are worthy of further exploration. That is, we can ask the following two interesting questions.
\begin{itemize}
\item [(1)]~Are there some equivalence relations of partial Boolean functions with $Q_{E}(f)=2, 3, \cdots$? If so, how do we convert them?
\item [(2)]~Can we represent partial Boolean functions with $Q_{E}(f)=2, 3, \cdots$ by some polynomials and find out many non-trivial examples?
\end{itemize}

\section*{Acknowledgements}

This work is supported in part by the National Natural Science Foundation of China (Nos.61572532, 61876195, 62272208), the Natural Science Foundation of Guangdong Province of China
(No.2017B030311011), the Natural Science Foundation of Henan (No.232300420426), the Science and Technology Innovation Team of Henan University (No.22IRTSTHN016), the Key Scientific Research Project of Higher Education of Henan Province (Nos.23A520013, 22A110014), the Guangxi Key Laboratory of Trusted Software (Grant No.KX202040).

\appendix

\section{A corollary of Theorem \ref{Reduceddecisiontheorem}}\label{arelatedresulttheorem}

Combining Theorem \ref{Reduceddecisiontheorem} with Lemma 4 of Ref.~\cite{Xuqiu2020partial}, we get the following result.
\begin{corollary}\label{aresult}
For any $n$-bit partial Boolean function $f$ with $Q_{E}(f)=1$,
\begin{equation}\label{}
\lvert\{a\oplus b:f(a)\neq f(b)\}\rvert\leq 2^{n-1}.
\end{equation}
\end{corollary}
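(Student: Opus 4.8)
The plan is to pass to the canonical (reduced) form of $f$, read off a witnessing non-negative vector from the Reduced decision theorem, and then establish a purely combinatorial bound on how many $0/1$-strings can make a fixed non-negative linear form equal to $\tfrac12$.

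First I would dispose of the trivial case: if $f$ is constant then $\{a\oplus b:f(a)\neq f(b)\}$ is empty and there is nothing to prove, so assume $f$ is non-constant and set $S=\{a\oplus b:f(a)\neq f(b)\}$. By Reduction law~2 (Corollary~\ref{Reductionlaw2}) the reduced function $g_f$, which sends $00\cdots0$ to $0$, every element of $S$ to $1$, and is undefined elsewhere, is still computed by an exact quantum $1$-query algorithm; moreover the only pairs of inputs on which $g_f$ disagrees are $(00\cdots0,s)$ with $s\in S$, so $\{a\oplus b:g_f(a)\neq g_f(b)\}=S$ as well. Hence it suffices to bound $|S|$ using $g_f$.

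Next, apply the Reduced decision theorem (Theorem~\ref{Reduceddecisiontheorem}) to $g_f$: there is a non-negative vector $\mathbf z=(z_1,\dots,z_n)^T$ with $z_1+\cdots+z_n\le 1$ and $\sum_{i:s_i=1}z_i=\tfrac12$ for every $s\in S$. Fix one such $\mathbf z$ and put $A=\{c\in\{0,1\}^n:\sum_{i:c_i=1}z_i=\tfrac12\}$, so $S\subseteq A$ and it remains to show $|A|\le 2^{n-1}$ — this is (a version of) Lemma~4 of \cite{Xuqiu2020partial}, which I would prove as follows. Let $P=\{i:z_i>0\}$ and $k=|P|$. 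The quantity $\sum_{i:c_i=1}z_i$ depends only on the restriction of $c$ to the coordinates in $P$, so $|A|=2^{\,n-k}\cdot|A'|$ where $A'=\{t\subseteq P:\sum_{i\in t}z_i=\tfrac12\}$. Since every $z_i$ with $i\in P$ is strictly positive, $t\mapsto\sum_{i\in t}z_i$ is strictly increasing along inclusion, so no two distinct comparable subsets of $P$ can both lie in $A'$; that is, $A'$ is an antichain in the Boolean lattice $2^P$. By Sperner's theorem $|A'|\le\binom{k}{\lfloor k/2\rfloor}$, and $\binom{k}{\lfloor k/2\rfloor}\le 2^{k-1}$ for every $k\ge 1$ (an easy induction via Pascal's rule; note $k\ge 1$ here, since $k=0$ would force $S=\emptyset$ and $f$ constant). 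Therefore $|A|=2^{\,n-k}|A'|\le 2^{\,n-k}\cdot 2^{\,k-1}=2^{\,n-1}$, and $|\{a\oplus b:f(a)\neq f(b)\}|=|S|\le|A|\le 2^{n-1}$.

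The main obstacle is the last step: converting the metric condition ``$\sum z_i x_i=\tfrac12$'' into an order-theoretic one. The key point is that positivity of the $z_i$ on their support makes the linear form strictly monotone, which is exactly what lets Sperner's bound apply; the rest is routine bookkeeping with the Reduction law and the Reduced decision theorem. It is reassuring that the bound is tight — Deutsch's function on $n=2$ bits has $z_1=z_2=\tfrac12$ and $|\{a\oplus b:f(a)\neq f(b)\}|=2=2^{\,n-1}$ — and the extremal case $k=2$ of Sperner's theorem matches this exactly, confirming that the combinatorial step is doing the real work.
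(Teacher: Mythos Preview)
Your proof is correct and takes a genuinely different route from the paper's. The paper works with the unreduced Decision theorem, passes from the Fourier basis to the polynomial basis (rewriting the constraints as $\mathbf{z}^T\mathbf{P}(a)=0$ with $\mathbf{P}(a)=(1,a_1,\ldots,a_n)^T$), deduces that the family $\{\mathbf{P}(a):a\in S\}$ has rank at most $n$, and then invokes Lemma~4 of \cite{Xuqiu2020partial}, a separate combinatorial lemma bounding how many $0/1$-strings can give $\mathbf{P}$-vectors of a prescribed rank. You instead stay entirely within the reduced framework: once the Reduced decision theorem hands you a non-negative $\mathbf{z}$ with $\sum_{i:s_i=1}z_i=\tfrac12$ for all $s\in S$, you observe that on the support of $\mathbf{z}$ the level set $\{t:\sum_{i\in t} z_i=\tfrac12\}$ is an antichain and bound it via Sperner's theorem together with $\binom{k}{\lfloor k/2\rfloor}\le 2^{k-1}$. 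Your argument is more self-contained --- it does not defer the decisive combinatorial step to an external reference --- and the monotonicity/antichain idea makes transparent \emph{why} the bound $2^{n-1}$ appears. The paper's rank-based route, by contrast, foregrounds the linear-algebraic structure and ties the corollary into the machinery already developed in \cite{Xuqiu2020partial}; it is less elementary but situates the result within that framework.
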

\begin{proof}
Let $S=\{a:f(a)=1\}\subseteq\{0,1\}^{n}$. By Theorem \ref{Decisiontheorem} there exist at least one non-negative solution of the equations
\begin{equation}\label{}
\begin{split}
\left \{
 \begin{aligned}
&\mathbf{1}^{T}\mathbf{z}=1, &\\
&\mathbf{z}^{T}\phi(a)=0, &a\in S.
 \end{aligned}
\right.
\end{split}
\end{equation}
Introducing the vector
$\varphi(a)=(1,a_{1},a_{2},\cdots,a_{n})^{T}$
for a string $a=a_{1}a_{2}\cdots a_{n}\in\{0,1\}^{n}$(Similar to $\phi(a)$, there also exists a one-to-one correspondence of the function vector $\varphi(a)$ and the input $a$), we have
\begin{equation}\label{}
\begin{split}
\left \{
 \begin{aligned}
&\mathbf{1}^{T}\mathbf{z}=1, &\\
&\mathbf{z}^{T}\varphi(a)=0, &a\in S.
 \end{aligned}
\right.
\end{split}
\end{equation}
Here, we used the fact that the polynomial coefficient vector can be transformed to the Fourier coefficient vector using an invertible matrix \cite{Xuqiu2021from}.
Next, there  exist at least one non-zero solution of the linear system of equations
\begin{equation}\label{}
\mathbf{z}^{T}\varphi(a)=0, a\in S.
\end{equation}
Using the solution theory of linear system of equations, we have
\begin{equation}\label{rankpx}
\textrm{rank}(\{\varphi(a):a\in S\})\leq n.
\end{equation}
Applying Lemma 4 of Ref. \cite{Xuqiu2020partial} (The Lemma says, ``Let the vector function $\varphi(X_{k})=(1,X_{k,1},X_{k,2},$\\
$\cdots,X_{k,n})^{T}$ for a string $X_{k}=X_{k,1}X_{k,2}\cdots X_{k,n}\in\{0,1\}^{n}$. If $n\geq 2$, for any $j$$(\in\{1,2,\cdots,n+1\})$ different basis vectors $\varphi(X_{1})$, $\varphi(X_{2})$, $\cdots,$ $\varphi(X_{j})$, there exist at most $T_{j}\leq2^{j-1}-j$ other different vectors $\varphi(X_{j+1})$, $\varphi(X_{j+2})$, $\cdots,$ $\varphi(X_{j+T_{j}})$ satisfying $\textrm{rank}([\varphi(X_{1}),\varphi(X_{2}),\cdots,\varphi(X_{j+T_{j}})])=j$.") to Eq.~\eqref{rankpx}, the proof is completed.
\end{proof}
\begin{remark}\label{exampledj}
By Corollary \ref{aresult}, if an $n$-bit partial Boolean function $f$ satisfies $\lvert\{a\oplus b:f(a)\neq f(b)\}\rvert>2^{n-1}$, then the exact quantum query complexity is bigger than 1. For an $n$-bit partial Boolean function $f$ satisfying $\lvert\{a\oplus b:f(a)\neq f(b)\}\rvert\leq2^{n-1}$, the number of equations in Theorem \ref{Reduceddecisiontheorem} is less than $2^{n-1}+1$. In contrast, the immediate number of equations in Theorem \ref{Decisiontheorem} is less than $2^{n}$. Thus, Corollary \ref{aresult} improves the efficiency of Theorem \ref{Decisiontheorem}.

Let us take the even $n$-bit partial Boolean function $(\{a:\lvert a\rvert=\frac{n}{2}\},\{b:\lvert b\rvert=0,n\})$ (i.e. Deutsch-Jozsa problem) as an example. Corollary \ref{aresult} works well on this function, since
\begin{equation}\label{}
\binom{n}{\frac{n}{2}}{=}\binom{n-1}{\frac{n}{2}}{+}\binom{n-1}{\frac{n}{2}-1}\leq \sum^{n-1}_{i=0}\binom{n-1}{i}{=}2^{n-1}.
\end{equation}
\end{remark}

\section{Applications of Theorem \ref{acorollaryresult}}\label{Applicationofacorollaryresult}
In this section, we employ Theorem \ref{acorollaryresult} to check 3-bit and 4-bit partial Boolean functions.

\subsection{Checking all 3-bit partial Boolean functions}\label{threebits}
In this subsection we check all 3-bit partial Boolean functions by Theorem \ref{acorollaryresult} and get the following result.
\begin{corollary}\label{3bitpart1query}
There do not exist  non-trivial 3-bit partial Boolean functions with $Q_{E}(f)=1$.
\end{corollary}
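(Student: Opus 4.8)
The plan is to reduce the claim to a finite, explicit case analysis driven by the Representation theorem (Theorem~\ref{acorollaryresult}). By Reduction law~2 (Corollary~\ref{Reductionlaw2}) it is enough to examine \emph{reduced} $3$-bit partial Boolean functions, i.e.\ those with $f(x)=0$ iff $x=000$; such an $f$ is completely determined by the set $S=\{a:f(a)=1\}\subseteq\{0,1\}^{3}\setminus\{000\}$, so there are only $2^{7}-1$ non-constant candidates. For each candidate, Theorem~\ref{acorollaryresult} tells us that $f$ has exact quantum $1$-query complexity precisely when there exist non-negative reals $a_{1},a_{2},a_{3}$ with $a_{1}+a_{2}+a_{3}\le 1$ and $\sum_{i:x_{i}=1}a_{i}=\tfrac{1}{2}$ for every $x\in S$ --- a tiny linear feasibility problem in three variables whose solution set is a polytope that can be written down by hand.

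First I would record two reductions of the casework. If $x,y\in S$ and $\mathrm{supp}(x)\subsetneq\mathrm{supp}(y)$, subtracting the two equations forces $\sum_{i\in\mathrm{supp}(y)\setminus\mathrm{supp}(x)}a_{i}=0$, so those $a_{i}$ vanish; more generally, whenever the solution polytope touches a coordinate hyperplane $a_{i}=0$ the $i$-th bit is removable and $f$ collapses to a partial Boolean function on fewer bits, so non-triviality condition~(1) fails. Running through the candidate sets $S$ then yields: some are infeasible (for instance $S=\{100,010,001\}$ needs $a_{1}+a_{2}+a_{3}=\tfrac{3}{2}>1$), hence not exact quantum $1$-query computable; every remaining feasible $S$ except $S=\{110,101,011\}$ has a solution with a zero coordinate and so reduces to a $\le 2$-bit function (Deutsch's function or a symmetric one); and $S=\{110,101,011\}$ has the unique solution $a_{1}=a_{2}=a_{3}=\tfrac{1}{4}$, which is exactly the symmetric function ``$\lvert x\rvert=2$'' and therefore --- after padding one zero --- is computed by Deutsch-Jozsa. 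Hence every reduced $3$-bit partial Boolean function with exact quantum $1$-query complexity violates condition~(1) or condition~(2), which is the assertion; these cases are collected in Table~\ref{Tab03}.

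The hard part is not any individual linear feasibility check --- each is immediate --- but making the enumeration airtight and legible: one must be sure that every subset $S$, not merely the ``antichain'' ones, has been covered, that each case labelled reducible genuinely collapses to a recognizable function on at most two bits, and that the single genuinely $3$-bit solution is correctly identified with the symmetric / Deutsch-Jozsa-computable function ``$\lvert x\rvert=2$''. That bookkeeping is precisely what Table~\ref{Tab03} is intended to make routine.
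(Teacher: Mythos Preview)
Your proposal is correct and follows essentially the same approach as the paper: both arguments invoke the Representation theorem (Theorem~\ref{acorollaryresult}) to reduce the question to a finite linear-feasibility check in three non-negative variables and record the outcome in Table~\ref{Tab03}. The only organisational difference is that the paper enumerates by coefficient patterns (fix which of $2a_{1},2a_{2},\ldots,2(a_{1}+a_{2}+a_{3})$ equals $1$ first, assuming $a_{i}>0$, and read off the resulting $S$), whereas you enumerate by candidate sets $S$ and test whether the polytope meets the open orthant; these are dual walks through the same six-case table and lead to the same identification of $S=\{110,101,011\}$ as the unique genuinely $3$-bit (and symmetric, Deutsch--Jozsa-computable) instance.
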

\begin{proof}
According to Theorem \ref{acorollaryresult}, a 3-bit partial Boolean function $f$ with $Q_{E}(f)=1$ is represented by a multilinear polynomial $p(x)=2a_{1}x_{1}+2a_{2}x_{2}+ 2a_{3}x_{3}$ for some non-negative real numbers $a_{1}, a_{2}, a_{3}$ satisfying $\sum^{3}_{i=1}a_{i}\leq 1$. Without loss of generality, assume that $a_{1}, a_{2}, a_{3}>0$. Thus,
\begin{equation}\label{p100p010}
\begin{split}
&(p(100),p(010),p(001),p(110),p(101),p(011),p(111))\\
=&(2a_{1},2a_{2},2a_{3},2(a_{1}+a_{2}),2(a_{1}+a_{3}),2(a_{2}+a_{3}),2(a_{1}+a_{2}+a_{3}))\in\{1,*\}^{7}.
\end{split}
\end{equation}
Now, we check all 3-bit multilinear polynomials in Eq.~\eqref{p100p010}. For convenience, all cases are listed in Table \ref{Tab03}.

\begin{table}
\begin{center}
\caption{All transformed 3-bit partial Boolean functions with $Q_{E}(f)=1$}\label{Tab03}%
\begin{tabular}{@{}lll@{}}
\toprule
Case & Boolean function: $\{b:f(b)=1\}$  & Remark \\
\midrule
1           &$\{100,011\}$                        & $f$ depends on two bits. \\

2           &                         & Similar to Case 1. \\

3           &$\{110,101,011\}$                          &   $f$ is symmetric. \\

4           &$\{101,011\}$                          &   $f$ depends on one bit.          \\

5           &$\{011\}$                             &   Included by Case 1.   \\

6           &$\{111\}$                           &    $f$ depends on one bit.  \\
\bottomrule
\end{tabular}
\end{center}
\end{table}

{\bf Case 1:} If $2a_{1}=1$, then $a_{1}, a_{2}, a_{3}>0$ and $\sum^{3}_{i=1}a_{i}\leq 1$ force that $2a_{2},2a_{3},2(a_{1}+a_{2}),2(a_{1}+a_{3}),2(a_{1}+a_{2}+a_{3}\neq 1$. Thus, the undetermined 3-bit partial Boolean function $f$ represented by Eq.~\eqref{p100p010} corresponds to
$(1,*,*,*,*,2(a_{2}+a_{3}),*).$
Set $2(a_{2}+a_{3})=1$ (For example, $a_{1}=\frac{1}{2},a_{2}=a_{3}=\frac{1}{4}$), then $p(100)=p(011)=1$ implies that $f$ depends on two bits (i.e., the first and the second bits) and can be computed exactly by Deutsch's algorithm.

{\bf Case 2:} By a similar argument, the cases that $2a_{1}\neq1, 2a_{2}=1$ and $2a_{1},2a_{2}\neq1,2a_{3}=1$ are similar to Case 1.

{\bf Case 3:} If $2a_{1}, 2a_{2}, 2a_{3}\neq1, 2(a_{1}+a_{2})=1$ and $a_{1}, a_{2}, a_{3}>0$ satisfying $\sum^{3}_{i=1}a_{i}\leq 1$, then the undetermined 3-bit partial Boolean function $f$ represented by Eq.~\eqref{p100p010} corresponds to
$(*,*,*,1,2(a_{1}+a_{3}),2(a_{2}+a_{3}),*).$
Set $2(a_{1}+a_{3})=2(a_{2}+a_{3})=1$ (i.e., $a_{1}=a_{2}=a_{3}=\frac{1}{4}$), then $f(110)=f(101)=f(011)=1$ implies that $f$ is a symmetric partial Boolean function which can be solved by Refs. \cite{Qiu2016character,Qiu2016revisit1}.

{\bf Case 4:} If $2a_{1}\neq1, 2a_{2}\neq1, 2a_{3}\neq1, 2(a_{1}+a_{2})\neq1, 2(a_{1}+a_{3})=1$ and $a_{1}, a_{2}, a_{3}>0$ satisfying $\sum^{3}_{i=1}a_{i}\leq 1$, then the undetermined 3-bit partial Boolean function $f$ represented by Eq.~\eqref{p100p010} corresponds to
$(*,*,*,*,1,2(a_{2}+a_{3}),*).$
Set $2(a_{2}+a_{3})=1$ (For example, $a_{1}=a_{2}=\frac{1}{6}$ and $a_{3}=\frac{1}{3}$), then $f(101)=f(011)=1$ imply that $f$ depends on the third bit and degenerates to a 1-bit Boolean function. 

{\bf Case 5:} If $2a_{1}\neq1, 2a_{2}\neq1, 2a_{3}\neq1, 2(a_{1}+a_{2})\neq1, 2(a_{1}+a_{3})\neq1, 2(a_{2}+a_{3})=1$ and $a_{1}, a_{2}, a_{3}>0$ satisfying $\sum^{3}_{i=1}a_{i}\leq 1$ (For example, $a_{1}=\frac{1}{8}$ and $a_{2}=a_{3}=\frac{1}{4}$), then the undetermined 3-bit partial Boolean function $f$ represented by Eq.~\eqref{p100p010} corresponds to
$(*,*,*,*,*,1,*).$
Thus, $f(011)=1$ implies that $f$ depends on the second (or the third) bit and degenerates to a 1-bit Boolean function.

{\bf Case 6:} If $2a_{1}\neq1, 2a_{2}\neq1, 2a_{3}\neq1, 2(a_{1}+a_{2})\neq1, 2(a_{1}+a_{3})\neq1, 2(a_{2}+a_{3})\neq1, 2(a_{1}+a_{2}+a_{3})=1$ and $a_{1}, a_{2}, a_{3}>0$ satisfying $\sum^{3}_{i=1}a_{i}\leq 1$ (For example, $a_{1}=a_{2}=a_{3}=\frac{1}{6}$), then the undetermined 3-bit partial Boolean function $f$ represented by Eq.~\eqref{p100p010} corresponds to
$(*,*,*,*,*,*,1).$
Thus, $f(111)=1$ implies that $f$ only depends on the first (or the second, or the third) bit and degenerates to a 1-bit Boolean function.

As above, all 3-bit partial Boolean functions with $Q_{E}(f)=1$ have been checked.
\end{proof}

\subsection{Checking all 4-bit partial Boolean functions}\label{fourbits}
In this subsection, we check all 4-bit partial Boolean functions by means of Theorem \ref{acorollaryresult} and obtain the following result.
\begin{corollary}\label{4bitpart1query}
There exist only 10 new non-trivial 4-bit reduced partial Boolean functions with $Q_{E}(f)=1$.
\end{corollary}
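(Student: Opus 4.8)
The plan is to mirror the exhaustive case analysis already carried out for $3$ bits in Corollary \ref{3bitpart1query}, but now for $n=4$. By the Representation theorem (Theorem \ref{acorollaryresult}), every reduced $4$-bit partial Boolean function with exact quantum $1$-query complexity is represented by $p(x)=2a_1x_1+2a_2x_2+2a_3x_3+2a_4x_4$ for some $a_1,a_2,a_3,a_4\ge 0$ with $\sum_{i=1}^4 a_i\le 1$. First I would discard the degenerate cases: if some $a_i=0$ the function does not genuinely depend on all four bits, so as in the remarks following Theorem \ref{Reduceddecisiontheorem} it collapses to a smaller function; hence we may assume $a_1,a_2,a_3,a_4>0$. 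The values of $p$ on the $15$ nonzero inputs are $2a_i$, $2(a_i+a_j)$, $2(a_i+a_j+a_k)$, and $2(a_1+a_2+a_3+a_4)$; the set $\{b:f(b)=1\}$ must be exactly $\{x\neq 0:p(x)=1\}$, i.e. $\{x : \sum_{i:x_i=1}a_i=\tfrac12\}$, and an input is forced to $*$ whenever $\sum_{i:x_i=1}a_i\neq\tfrac12$.

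Next I would enumerate, up to permutation of the four coordinates (which only relabels bits and does not change whether the function is ``new'' or non-trivial), which subsets $T\subseteq\{0,1\}^4\setminus\{0\}$ can arise as $\{x:\sum_{i\in x}a_i=\tfrac12\}$ for an admissible weight vector $(a_1,a_2,a_3,a_4)$. This is a finite feasibility question: for each candidate $T$, the requirement ``$\sum_{i\in x}a_i=\tfrac12$ for $x\in T$ and $\neq\tfrac12$ for $x\notin T$, with $a_i>0$ and $\sum a_i\le 1$'' is a linear-arithmetic system whose solvability is decidable by inspection. I would organise the enumeration by the pattern of which elementary sums equal $\tfrac12$ — exactly as the six cases in Table \ref{Tab03} were organised by whether $2a_1=1$, $2(a_1+a_2)=1$, etc. For each feasible $T$ I then classify the resulting $f$: it is trivial (symmetric, or reducible to fewer bits, or computable by Deutsch--Jozsa after padding as in the two known families listed after Theorem \ref{Reduceddecisiontheorem}), or it is one of the genuinely new examples. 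Collecting the feasible $T$ that survive this triviality filter yields Table \ref{Tab04}, and counting them gives exactly $10$.

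The main obstacle is not conceptual but combinatorial bookkeeping: ensuring the case split over the $15$ linear conditions $\sum_{i\in x}a_i=\tfrac12$ is genuinely exhaustive and that no feasible weight vector is missed, while also correctly quotienting by the $S_4$ action on coordinates so that symmetric or relabelled duplicates are not double-counted. A secondary subtlety is the triviality test: one must be careful to recognise when a $4$-bit function is ``really'' a smaller function — this happens precisely when some admissible solution has a zero coordinate, or when the dependence structure forces a coordinate collapse — and when it coincides (after zero-padding) with a symmetric Deutsch--Jozsa-type function already known from \cite{Qiu2016character,Qiu2016revisit1,Deutsch1992rapidsolution}. Once the enumeration and this filter are executed carefully, the count of $10$ new non-trivial reduced functions follows, and the explicit list is recorded in Table \ref{Tab04}.
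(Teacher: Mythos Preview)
Your overall strategy is the paper's strategy: invoke the Representation theorem, assume $a_1,a_2,a_3,a_4>0$, and exhaust the feasible patterns of which partial sums $\sum_{i:x_i=1}a_i$ equal $\tfrac12$, labelling each resulting $T$ as trivial or not and recording the survivors in Table~\ref{Tab04}. The paper organises this by a lexicographic sweep over the fifteen linear conditions (Cases~1--10, with subcases), exactly the scheme you sketch.

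There is, however, a concrete slip in your bookkeeping plan. You propose to enumerate the sets $T$ \emph{up to the $S_4$ action on coordinates} and then say ``counting them gives exactly $10$''. But the $10$ non-trivial functions in Table~\ref{Tab04} are not $10$ distinct $S_4$-orbits; they are $10$ distinct functions lying in only \emph{two} orbits. One orbit (size $4$) is represented by $\{1100,1010,1001,0111\}$ with $a_1=\tfrac13$, $a_2=a_3=a_4=\tfrac16$ (Cases 3(1.1.2), 3(3), 4(2), 5(2)); the other (size $6$) by $\{1100,1011,0111\}$ with $a_1=a_2=\tfrac14$, $a_3=a_4=\tfrac18$ (Cases 3(6), 4(5), 5(4), 6(3), 7(2), 8). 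If you genuinely quotient by $S_4$ you will find $2$ classes, not $10$, and must then re-expand by orbit sizes to recover the stated count. The paper avoids this by \emph{not} quotienting in the non-trivial range: its ordered sweep through the fifteen conditions produces each of the $10$ functions in a separate subcase. So either drop the $S_4$ reduction and follow the paper's ordered enumeration, or keep the reduction but add an explicit orbit-counting step at the end.
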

\begin{proof}
According to Theorem \ref{acorollaryresult}, a 4-bit partial Boolean function $f$ with $Q_{E}(f)=1$ can be represented by the multilinear polynomial $p(x)=2a_{1}x_{1}+2a_{2}x_{2}+2a_{3}x_{3}+2a_{4}x_{4}$ for some non-negative real numbers $a_{1}, a_{2}, a_{3}, a_{4}$ satisfying $\sum^{4}_{i=1}a_{i}\leq 1$. Without loss of generality, assume that $a_{1}, a_{2}, a_{3}, a_{4}>0$. Thus,
\begin{equation}\label{p1000p0100}
\begin{split}
&(p(1000),\cdots,p(0001),p(1100),\cdots,p(0011),
p(1110),\cdots,p(0111),p(1111))\\
=&(2a_{1},\cdots,2a_{4},2(a_{1}+a_{2}),\cdots,2(a_{3}+a_{4}),2(a_{1}+a_{2}+a_{3}),\cdots,\\
&2(a_{2}+a_{3}+a_{4}),2(a_{1}+a_{2}+a_{3}+a_{4}))\in\{1,*\}^{15}.
\end{split}
\end{equation}
Now, we check all 4-bit multilinear polynomials in Eq.~\eqref{p1000p0100} case by case. For convenience, all cases are listed in Table \ref{Tab04}.

\begin{table}
\begin{center}
\caption{All transformed 4-bit partial Boolean functions with $Q_{E}(f)=1$}\label{Tab04}%
\begin{tabular}{@{}lll@{}}
\toprule
Case & Boolean function: $\{b:f(b)=1\}$  & Remark \\
\midrule
1           &$\{1000,0111\}$                        & $f$ depends on two bits. \\

2           &                          & Similar to Case 1. \\

3: (1.1.1)           &$\{1100,1010,1001,0110,0101,0011\}$                          &  $f$ is symmetric.  \\

{\bf 3: (1.1.2)}           &$\{1100,1010,1001,0111\}$                          &  {\bf Non-trivial}  \\

3: (1.2)           &$\{1100,1010,0110\}$                           &  Included by Case 3: (1.1.1).  \\

3: (1.3)           &$\{1100,1010,0101,0011\}$                          &  Included by Case 3: (1.1.1).  \\

3: (1.4)           &$\{1100,1010,0111\}$                           &  $f$ depends on two bits.  \\

3: (2.1)           &$\{1100,1001,0110,0011\}$                          &  Included by Case 3: (1.1.1).  \\

3: (2.2)           &$\{1100,1001,0101\}$                          &  Included by Case 3: (1.1.1).  \\

3: (2.3)           &$\{1100,1001,0111\}$                          &  Included by Case 3: (1.1.2).  \\

{\bf 3: (3)}           &$\{1100,0110,0101,1011\}$                          &  {\bf Non-trivial} \\

3: (4)           &$\{1100,0101,1011\}$                          & Included by Case 3: (3). \\

3: (5)           &$\{1100,0011\}$                         &  Included by Case 3: (1.1.1).  \\

{\bf 3: (6)}           &$\{1100,1011,0111\}$                         &  {\bf Non-trivial}  \\

4: (1.1)           &$\{1010,1001,0110,0101\}$                         &  Included by Case 3: (1.1.1).  \\

4: (1.2)           &$\{1010,1001,0011\}$                          &  Included by Case 3: (1.1.1).  \\

4: (1.3)           &$\{1010,1001,0111\}$                         &  Included by Case 3: (1.1.2).  \\

{\bf 4: (2)}           &$\{1010,0110,0011,1101\}$                         &  {\bf Non-trivial}  \\

4: (3)           &$\{1010,0101\}$                         &  Included by Case 3: (1.1.1).  \\

4: (4)           &$\{1010,0011,1101\}$                         &  Included by Case 4: (2).  \\

{\bf 4: (5)}           &$\{1010,1101,0111\}$                         &  {\bf Non-trivial}  \\

5: (1)           &$\{1001,0110\}$                          &  Included by Case 3: (1.1.1).  \\

{\bf 5: (2)}           &$\{1001,0101,0011,1110\}$                         &  {\bf Non-trivial}  \\

5: (3)           &$\{1001,0011,1110\}$                         &  Included by Case 5: (2).  \\

{\bf 5: (4)}           &$\{1001,1110,0111\}$                        &  {\bf Non-trivial}  \\

6: (1.1)           &$\{0110,0101,0011\}$                         &  Included by Case 3: (1.1.1).  \\

6: (1.2)           &$\{0110,0101,1011\}$                        &  Included by Case 3: (3).  \\

6: (2)           &$\{0110,0011,1101\}$                         &  Included by Case 4: (2).  \\

{\bf 6: (3)}           &$\{0110,1101,1011\}$                        &  {\bf Non-trivial}  \\

7: (1)           &$\{0101,0011,1110\}$                         &  Included by Case 5: (2).  \\

{\bf 7: (2)}           &$\{0101,1110,1011\}$                        &  {\bf Non-trivial}  \\

{\bf 8:}            &$\{0011,1110,1101\}$                         &  {\bf Non-trivial}  \\

9:            &$\{1110,1101,1011,0111\}$                         &  $f$ is symmetric.  \\

10:            &$\{1111\}$                         &   $f$ depends on one bit. \\
\bottomrule
\end{tabular}
\end{center}
\end{table}

{\bf Case 1:} If $2a_{1}=1$ and $a_{1}, a_{2}, a_{3}, a_{4}>0$ satisfying $\sum^{4}_{i=1}a_{i}\leq 1$, then $f(1x_{2}x_{3}x_{4})=*$ for all $\lvert x_{2}x_{3}x_{4}\rvert\in\{1,2,3\}$ and $f(0x_{2}x_{3}x_{4})=*$ for all $\lvert x_{2}x_{3}x_{4}\rvert\in\{1,2\}$. Thus, the undetermined 4-bit partial Boolean function $f$ represented by Eq.~\eqref{p1000p0100} corresponds to the vector
\begin{equation}\label{}
(1, *, *, *, *,*,*,*,*,*,*,*,*,2(a_{2}+a_{3}+a_{4}),*).
\end{equation}
Set $2(a_{2}+a_{3}+a_{4})=1$ (For example, $a_{1}=\frac{1}{2}$ and $a_{2}=a_{3}=a_{4}=\frac{1}{6}$), then the 4-bit partial Boolean function $f$ corresponds to the vector
\begin{equation}\label{}
(1,*,*,*,*,*,*,*,*,*,*,*,*,1,*)
\end{equation}
and $f(1000)=f(0111)=1$. Obviously, $f$ depends on two bits (i.e., the first and the second bits, or the first and the third bits, or the first and the fourth bits) and can be computed exactly by Deutsch's algorithm.

{\bf Case 2:} The cases that $2a_{1}\neq1,$ $2a_{2}=1$ and $2a_{1},$ $2a_{2}\neq1,$ $2a_{3}=1$ and $2a_{1},$ $2a_{2},$ $2a_{3}\neq1,$ $2a_{4}=1$ are similar to Case 1.

{\bf Case 3:} If $2a_{1},$ $2a_{2},$ $2a_{3},$ $2a_{4}\neq1,$ $2(a_{1}+a_{2})=1$ and $a_{1},$ $a_{2},$ $a_{3},$ $a_{4}>0$ satisfying $\sum^{4}_{i=1}a_{i}\leq 1$, then
the undetermined 4-bit partial Boolean function $f$ represented by Eq.~\eqref{p1000p0100} corresponds to the vector
\begin{equation}\label{a1a3a1a4}
\begin{split}
&(*,*,*,*,1,2(a_{1}+a_{3}),2(a_{1}+a_{4}),2(a_{2}+a_{3}),
2(a_{2}+a_{4}),2(a_{3}+a_{4}),\\
&*,*,2(a_{1}+a_{3}+a_{4}),
2(a_{2}+a_{3}+a_{4}),*).
\end{split}
\end{equation}
Next, we discuss Eq.~\eqref{a1a3a1a4} one by one.
\begin{itemize}
\item [(1)]~If $2(a_{1}+a_{3})=1$, then $a_{2}=a_{3}$. Also, the vector in Eq.~\eqref{a1a3a1a4} degenerates to
\begin{equation}\label{a1a44a2}
\begin{split}
(*,*,*,*,1,1,2(a_{1}{+}a_{4}),4a_{2},2(a_{2}{+}a_{4}),
2(a_{2}{+}a_{4}),*,*,*,2(2a_{2}{+}a_{4}),*).
\end{split}
\end{equation}
In Eq.~\eqref{a1a44a2}, there are four subcases.
   \begin{itemize}
   \item [(1.1)]~If $2(a_{1}+a_{4})=1$, then $a_{2}=a_{3}=a_{4}$. The vector in Eq.~\eqref{a1a44a2} degenerates to
\begin{equation}\label{4a24a24a2}
(*,*,*,*,1,1,1,4a_{2},4a_{2},4a_{2},*,*,*,6a_{2},*).
\end{equation}
In Eq.~\eqref{4a24a24a2}, there are two subcases.
       \begin{itemize}
       \item [(1.1.1)]~Set $4a_{2}=1$ (i.e., $a_{1}=a_{2}=a_{3}=a_{4}=\frac{1}{4}$), then the vector in Eq.~\eqref{4a24a24a2} degenerates to
$(*,*,*,*,1,1,1,1,1,1,*,*,*,*,*).$
       That is, $f(1100)=f(1010)=f(1001)=f(0110)=f(0101)=f(0011)=1$.
       \item [(1.1.2)]~Set $6a_{2}=1$ (i.e., $a_{1}=\frac{1}{3}$ and $a_{2}=a_{3}=a_{4}=\frac{1}{6}$), then the vector in Eq.~\eqref{4a24a24a2} degenerates to
$(*,*,*,*,1,1,1,*,*,*,*,*,*,1,*).$
       That is, $f(1100)=f(1010)=f(1001)=f(0111)=1$.
       \end{itemize}
   \item [(1.2)]~Set $2(a_{1}+a_{4})\neq1$ and $4a_{2}=1$ (i.e., $a_{1}=a_{2}=a_{3}=\frac{1}{4}$ and $0<a_{4}<\frac{1}{4}$), then the vector in Eq.~\eqref{a1a44a2} degenerates to
\begin{equation}\label{}
(*,*,*,*,1,1,*,1,*,*,*,*,*,*,*).
\end{equation}
   That is, $f(1100)=f(1010)=f(0110)=1$.
   \item [(1.3)]~Set $2(a_{1}+a_{4}),$ $4a_{2}\neq1$ and $2(a_{2}+a_{4})=1$ (For example, $a_{1}=a_{4}=\frac{1}{3}$ and $a_{2}=a_{3}=\frac{1}{6}$), then the vector in Eq.~\eqref{a1a44a2} degenerates to
$(*,*,*,*,1,1,*,*,1,1,*,*,*,*,*).$
    That is, $f(1100)=f(1010)=f(0101)=f(0011)=1$. Note that $f$ depends on two bits (i.e., the second and the third bits, or the first and the fourth bits) and can be computed exactly by Deutsch's algorithm.
   \item [(1.4)]~Set $2(a_{1}+a_{4}),$ $4a_{2},$ $2(a_{2}+a_{4})\neq1$ and $2(2a_{2}+a_{4})=1$ (For example, $\frac{1}{3}a_{1}=a_{2}=a_{3}=\frac{1}{2}a_{4}=\frac{1}{8}$), then the vector in Eq.~\eqref{a1a44a2} degenerates to
$(*,*,*,*,1,1,*,*,*,*,*,*,*,1,*).$
    That is, $f(1100)=f(1010)=f(0111)=1$. Note that $f$ depends on two bits (i.e., the first and the fourth bits) and can be computed by Deutsch's algorithm.
   \end{itemize}
\item [(2)]~If $2(a_{1}+a_{3})\neq1$ and $2(a_{1}+a_{4})=1$, then $a_{2}\neq a_{3},$ $a_{3}\neq a_{4}$ and $a_{2}=a_{4}$. And, the vector in Eq.~\eqref{a1a3a1a4} degenerates to
\begin{equation}\label{2a2+a34a2}
\begin{split}
(*,*,*,*,1,*,1,2(a_{2}+a_{3}),4a_{2},2(a_{3}+a_{2}),
*,*,*,2(2a_{2}+a_{3}),*).
\end{split}
\end{equation}
In Eq.~\eqref{2a2+a34a2}, there are three subcases.
   \begin{itemize}
   \item [(2.1)]~Since $2(a_{2}+a_{3})=1$ and $4a_{2}=1$ implies $a_{1}=a_{2}=a_{3}=a_{4}=\frac{1}{4}$ which contradicts with $2(a_{1}+a_{3})\neq1$, $2(a_{2}+a_{3})=1$ and $4a_{2}\neq1$. For example, $a_{1}=a_{3}=\frac{1}{3}$ and $a_{2}=a_{4}=\frac{1}{6}$. And, the vector in Eq.~\eqref{2a2+a34a2} degenerates to
$(*,*,*,*,1,*,1,1,*,1,*,*,*,*,*).$
        Thus, $f(1100)=f(1001)=f(0110)=f(0011)=1$.
   \item [(2.2)]~Set $2(a_{2}+a_{3})\neq1$ and $4a_{2}=1$ (i.e., $a_{1}=a_{2}=a_{4}=\frac{1}{4}$ and $0<a_{3}<\frac{1}{4}$), then the vector in Eq.~\eqref{2a2+a34a2} degenerates to
\begin{equation}\label{}
(*,*,*,*,1,*,1,*,1,*,*,*,*,*,*).
\end{equation}
    Thus, $f(1100)=f(1001)=f(0101)=1$.
   \item [(2.3)]~Set $2(a_{2}+a_{3}),$ $4a_{2}\neq1$ and $2(2a_{2}+a_{3})=1$ (For example, $\frac{1}{3}a_{1}=a_{2}=\frac{1}{2}a_{3}=a_{4}=\frac{1}{8}$), then the vector in Eq.~\eqref{2a2+a34a2} degenerates to
$(*,*,*,*,1,*,1,*,*,*,*,*,*,1,*).$
    Thus, $f(1100)=f(1001)=f(0111)=1.$
   \end{itemize}

\item [(3)]~If $2(a_{1}+a_{3}),$ $2(a_{1}+a_{4})\neq1$ and $2(a_{2}+a_{3})=1$, then $a_{2}\neq a_{3},$ $a_{2}\neq a_{4}$ and $a_{1}=a_{3}$. The vector in Eq.~\eqref{a1a3a1a4} degenerates to
\begin{equation}\label{112a2a42a1a4}
\begin{split}
(*,*,*,*,1,*,*,1,2(a_{2}+a_{4}),*,*,*,
2(2a_{1}+a_{4}),*,*).
\end{split}
\end{equation}
Set $2(a_{2}+a_{4})=1=2(2a_{1}+a_{4})$ (i.e., $a_{1}=a_{3}=a_{4}=\frac{1}{6}$ and $a_{2}=\frac{1}{3}$), then the vector in Eq.~\eqref{112a2a42a1a4} degenerates to
$(*,*,*,*,1,*,*,1,1,*,*,*,1,*,*).$
Thus, $f(1100)=f(0110)=f(0101)=f(1011)=1$.

\item [(4)]~If $2(a_{1}+a_{3}),$ $2(a_{1}+a_{4}),$ $2(a_{2}+a_{3})\neq1$ and $2(a_{2}+a_{4})=1$, then $a_{1}=a_{4}$. The vector in Eq.~\eqref{a1a3a1a4} degenerates to
\begin{equation}\label{1122a1a3}
\begin{split}
(*,*,*,*,1,*,*,*,1,*,*,*,
2(2a_{1}+a_{3}),*,*).
\end{split}
\end{equation}
   Set $2(2a_{1}+a_{3})=1$ (For example, $a_{1}=a_{4}=\frac{7}{24},$ $a_{2}=\frac{5}{24}$ and $a_{3}=\frac{1}{12}$), then the vector in Eq.~\eqref{1122a1a3} degenerates to
$(*,*,*,*,1,*,*,*,1,*,*,*,1,*,*).$
    Thus, $f(1100)=f(0101)=f(1011)=1$.

\item [(5)]~Set $2(a_{1}+a_{3}),$ $2(a_{1}+a_{4}),$ $2(a_{2}+a_{3}),$ $2(a_{2}+a_{4})\neq1$ and $2(a_{3}+a_{4})=1$ (For example, $a_{1}=\frac{1}{6},$ $a_{2}=\frac{1}{3},$ $a_{3}=\frac{1}{8}$ and $a_{4}=\frac{3}{8}$), then the vector in Eq.~\eqref{a1a3a1a4} degenerates to
$(*,*,*,*,1,*,*,*,*,1,*,*,*,*,*).$
 Thus, $f(1100)=f(0011)=1$.

\item [(6)]~Set $2(a_{1}+a_{3}),$ $2(a_{1}+a_{4}),$ $2(a_{2}+a_{3}),$ $2(a_{2}+a_{4}),$ $2(a_{3}+a_{4})\neq1$ and $2(a_{1}+a_{3}+a_{4})=2(a_{2}+a_{3}+a_{4})=1$ (For example, $a_{1}=a_{2}=\frac{1}{4}$ and $a_{3}=a_{4}=\frac{1}{8}$), then the vector in Eq.~\eqref{a1a3a1a4} degenerates to
$(*,*,*,*,1,*,*,*,*,*,*,*,1,1,*).$
Thus, $f(1100)=f(1011)=f(0111)=1$.
\end{itemize}

{\bf Case 4:} If $2a_{1},$ $2a_{2},$ $2a_{3},$ $2a_{4},$ $2(a_{1}+a_{2})\neq1,$ $2(a_{1}+a_{3})=1$ and $a_{1},$ $a_{2},$ $a_{3},$ $a_{4}>0$ satisfying $\sum^{4}_{i=1}a_{i}\leq 1$, then the undetermined 4-bit partial Boolean function $f$ represented by Eq.~\eqref{p1000p0100} corresponds to the vector
\begin{equation}\label{12a1a42a2a3}
\begin{split}
&(*,*,*,*,*,1,2(a_{1}{+}a_{4}),2(a_{2}{+}a_{3}),
2(a_{2}{+}a_{4}),2(a_{3}{+}a_{4}),*,2(a_{1}{+}a_{2}{+}a_{4}),\\
&*,2(a_{2}{+}a_{3}{+}a_{4}),*).
\end{split}
\end{equation}
Next, we discuss Eq.~\eqref{12a1a42a2a3} one by one.
\begin{itemize}
\item [(1)]~If $2(a_{1}+a_{4})=1$, then $a_{3}=a_{4}$. The vector in Eq.~\eqref{12a1a42a2a3} degenerates to
\begin{equation}\label{112a2a32a2a3}
\begin{split}
(*,*,*,*,*,1,1,2(a_{2}+a_{3}),2(a_{2}+a_{3}),
4a_{3},*,*,*,2(a_{2}+2a_{3}),*).
\end{split}
\end{equation}
In Eq.~\eqref{112a2a32a2a3}, there are three subcases.
   \begin{itemize}
   \item [(1.1)]~If $2(a_{2}+a_{3})=1$, then the vector in Eq.~\eqref{112a2a32a2a3} degenerates to
$(*,*,*,*,*,1,1,1,1,4a_{3},*,*,$\\
$*,*,*).$
    Since $4a_{3}=1$ implies $a_{1}=a_{2}=a_{3}=a_{4}=$ $\frac{1}{4}$ which contradicts with the condition $2(a_{1}+a_{2})\neq1$, $4a_{3}\neq1$ and $f(1010)=f(1001)=f(0110)=f(0101)=1$. In this case, we can set $a_{1}=a_{2}=\frac{3}{8}$ and $a_{3}=a_{4}=\frac{1}{8}$.
   \item [(1.2)]~Set $2(a_{2}+a_{3})\neq1$ and $4a_{3}=1$ (For example, $a_{1}=a_{3}=a_{4}=\frac{1}{4}$ and $a_{2}=\frac{1}{8}$ ), then the vector in Eq.~\eqref{112a2a32a2a3} degenerates to
\begin{equation}\label{}
(*,*,*,*,*,1,1,*,*,1,*,*,*,*,*).
\end{equation}
    Thus, $f(1010)=f(1001)=f(0011)=1$.
   \item [(1.3)]~Set $2(a_{2}+a_{3}),$ $4a_{3}\neq1$ and $2(a_{2}+2a_{3})=1$ (For example, $a_{1}=\frac{5}{12},$ $a_{2}=\frac{1}{3},$ $a_{3}=a_{4}=\frac{1}{12}$), then the vector in Eq.~\eqref{112a2a32a2a3} degenerates to
$(*,*,*,*,*,1,1,*,*,*,*,*,*,1,*).$
Thus, $f(1010)=f(1001)=f(0111)=1$.
   \end{itemize}
\item [(2)]~If $2(a_{1}+a_{4})\neq1$ and $2(a_{2}+a_{3})=1$, then $a_{1}=a_{2}\neq a_{3}$ and $a_{3}\neq a_{4}$. Thus, the vector in Eq.~\eqref{12a1a42a2a3} degenerates to
\begin{equation}\label{112a3a422a1a4}
\begin{split}
(*,*,*,*,*,1,*,1,*,2(a_{3}+a_{4}),*,
2(2a_{1}+a_{4}),*,*,*).
\end{split}
\end{equation}
Set $2(a_{3}+a_{4})=1=2(2a_{1}+a_{4})$ (For example, $a_{1}=a_{2}=\frac{1}{2}a_{3}=a_{4}=\frac{1}{6}$), then the vector in Eq.~\eqref{112a3a422a1a4} degenerates to
\begin{equation}\label{}
(*,*,*,*,*,1,*,1,*,1,*,1,*,*,*)
\end{equation}
and $f(1010)=f(0110)=f(0011)=f(1101)=1$.
\item [(3)]~If $2(a_{1}+a_{4}),$ $2(a_{2}+a_{3})\neq1$ and $2(a_{2}+a_{4})=1$, then $a_{3}\neq a_{2}$ implies that the vector in Eq.~\eqref{12a1a42a2a3} degenerates to
$(*,*,*,*,*,1,*,*,1,*,*,*,*,*,*).$
Thus, $f(1010)=f(0101)=1$. In this case, we can set $a_{1}=\frac{1}{8},$ $a_{2}=\frac{1}{6},$ $a_{3}=\frac{3}{8}$ and $a_{4}=\frac{1}{3}$.
\item [(4)]~Set $2(a_{1}+a_{4}),$ $2(a_{2}+a_{3}),$ $2(a_{2}+a_{4})\neq1$ and $2(a_{3}+a_{4})=1=2(a_{1}+a_{2}+a_{4})$ (For example, $a_{1}=\frac{1}{12},$ $a_{2}=\frac{1}{3},$ $a_{3}=\frac{5}{12},$ $a_{4}=\frac{1}{12}$), then the vector in Eq.~\eqref{12a1a42a2a3} degenerates to
$(*,*,*,*,*,1,*,*,*,1,*,1,*,*,*)$
and $f(1010)=f(0011)=f(1101)=1$.
\item [(5)]~Set $2(a_{1}+a_{4}),$ $2(a_{2}+a_{3}),$ $2(a_{2}+a_{4}),$ $2(a_{3}+a_{4})\neq1$ and $2(a_{1}+a_{2}+a_{4})=2(a_{2}+a_{3}+a_{4})=1$ (i.e., $a_{1}=a_{3}=a_{2}+a_{4}=\frac{1}{4}$), then the vector in Eq.~\eqref{12a1a42a2a3} degenerates to
$(*,*,*,*,*,1,*,*,*,*,*,1,*,1,*).$
Thus, $f(1010)=f(1101)=f(0111)=1$.
\end{itemize}

{\bf Case 5:} If $2a_{1},$ $2a_{2},$ $2a_{3},$ $2a_{4},$ $2(a_{1}+a_{2}),$ $2(a_{1}+a_{3})\neq1,$ $2(a_{1}+a_{4})=1$ and $a_{1},$ $a_{2},$ $a_{3},$ $a_{4}>0$ satisfying $\sum^{4}_{i=1}a_{i}\leq 1$, then
the undetermined 4-bit partial Boolean function $f$ represented by Eq.~\eqref{p1000p0100} corresponds to the vector
\begin{equation}\label{12a2a32a2a42a3a4111}
\begin{split}
(*,*,*,*,*,*,1,2(a_{2}{+}a_{3}),2(a_{2}{+}a_{4}),
2(a_{3}{+}a_{4}),2(a_{1}{+}a_{2}{+}a_{3}),*,*,
2(a_{2}{+}a_{3}{+}a_{4}),*).
\end{split}
\end{equation}
Next, we discuss Eq.~\eqref{12a2a32a2a42a3a4111} one by one.
\begin{itemize}
\item [(1)]~If $2(a_{2}+a_{3})=1$, then the vector in Eq.~\eqref{12a2a32a2a42a3a4111} degenerates to
\begin{equation}\label{}
(*,*,*,*,*,*,1,1,2(a_{2}+a_{4}),2(a_{3}+a_{4}),*,*,*,*,*).
\end{equation}
And, $a_{1}\neq a_{2}$ (implied by $2(a_{1}+a_{3})\neq1$ and $2(a_{2}+a_{3})=1$) and $2(a_{1}+a_{4})=1$ imply that $2(a_{2}+a_{4})\neq 1$.
Next, $a_{1}\neq a_{3}$ (implied by $2(a_{1}+a_{2})\neq1$ and $2(a_{2}+a_{3})=1$) and $2(a_{1}+a_{4})=1$ imply that $2(a_{3}+a_{4})\neq 1$.
Thus, $f(1001)=f(0110)=1$. In this case, we can set $a_{1}=\frac{1}{8},$ $a_{2}=\frac{1}{6},$ $a_{3}=\frac{1}{3},$ $a_{4}=\frac{3}{8}$.
\item [(2)]~If $2(a_{2}+a_{3})\neq1$ and $2(a_{2}+a_{4})=1$, then the vector in Eq.~\eqref{12a2a32a2a42a3a4111} degenerates to
\begin{equation}\label{112a3a42a1a2a3}
\begin{split}
(*,*,*,*,*,*,1,*,1,2(a_{3}+a_{4}),
2(a_{1}+a_{2}+a_{3}),*,*,*,*).
\end{split}
\end{equation}
Set $2(a_{3}+a_{4})=2(a_{1}+a_{2}+a_{3})=1$ (For example, $a_{1}=a_{2}=a_{3}=\frac{1}{6}$ and $a_{4}=\frac{1}{3}$), then the vector in Eq.~\eqref{112a3a42a1a2a3} degenerates to
\begin{equation}\label{}
(*,*,*,*,*,*,1,*,1,1,1,*,*,*,*)
\end{equation}
and $f(1001)=f(0101)=f(0011)=f(1110)=1$.
\item [(3)]~If $2(a_{2}+a_{3}),$ $2(a_{2}+a_{4})\neq1$ and $2(a_{3}+a_{4})=1$, then the vector in Eq.~\eqref{12a2a32a2a42a3a4111} degenerates to
\begin{equation}\label{112a1a2a3***}
\begin{split}
(*,*,*,*,*,*,1,*,*,1,2(a_{1}+a_{2}+a_{3}),
*,*,*,*).
\end{split}
\end{equation}
Set $2(a_{1}+a_{2}+a_{3})=1$ (For example, $a_{1}=a_{3}=\frac{1}{12},$ $a_{2}=\frac{1}{3},$ $a_{4}=\frac{5}{12}$), then the vector in Eq.~\eqref{112a1a2a3***} degenerates to
$(*,*,*,*,*,*,1,*,*,1,1,*,*,*,*)$
and $f(1001)=f(0011)=f(1110)=1$.
\item [(4)]~Set $2(a_{2}+a_{3}),$ $2(a_{2}+a_{4}),$ $2(a_{3}+a_{4})\neq1$ and $2(a_{1}+a_{2}+a_{3})=2(a_{2}+a_{3}+a_{4})=1$ (For example, $a_{1}=a_{4}=\frac{1}{4}=a_{2}+a_{3}$), then the vector in Eq.~\eqref{12a2a32a2a42a3a4111} degenerates to
$(*,*,*,*,*,*,1,*,*,*,1,*,*,1,*)$
and $f(1001)=f(1110)=f(0111)=1$.
\end{itemize}

{\bf Case 6:} If $2a_{1},$ $2a_{2},$ $2a_{3},$ $2a_{4},$ $2(a_{1}+a_{2}),$ $2(a_{1}+a_{3}),$ $2(a_{1}+a_{4})\neq1,$ $2(a_{2}+a_{3})=1$ and $a_{1},$ $a_{2},$ $a_{3},$ $a_{4}>0$ satisfying $\sum^{4}_{i=1}a_{i}\leq 1$, then
the undetermined 4-bit partial Boolean function $f$ represented by Eq.~\eqref{p1000p0100} corresponds to the vector
\begin{equation}\label{case612a2a42a3a4}
\begin{split}
(*,*,*,*,*,*,*,1,2(a_{2}{+}a_{4}),2(a_{3}{+}a_{4}),*,
2(a_{1}{+}a_{2}{+}a_{4}),2(a_{1}{+}a_{3}{+}a_{4}),*,*).
\end{split}
\end{equation}
Next, we discuss Eq.~\eqref{case612a2a42a3a4} one by one.
\begin{itemize}
\item [(1)]~If $2(a_{2}+a_{4})=1$, then $a_{3}=a_{4}$ and the vector in Eq.~\eqref{case612a2a42a3a4} degenerates to
\begin{equation}\label{***114a32a12a3}
(*,*,*,*,*,*,*,1,1,4a_{3},*,*,2(a_{1}+2a_{3}),*,*).
\end{equation}
In Eq.~\eqref{***114a32a12a3}, there are two subcases.
\begin{itemize}
\item [(1.1)]~Set $4a_{3}=1$ (i.e., $a_{1}<a_{2}=a_{3}=a_{4}=\frac{1}{4}$), then the vector in Eq.~\eqref{***114a32a12a3} degenerates to
$(*,*,*,*,*,*,*,1,1,1,*,*,*,*,*)$
and $f(0110)=f(0101)=f(0011)=1$.
\item [(1.2)]~Set $2(a_{1}+2a_{3})=1$ (For example, $a_{1}=\frac{1}{12},$ $a_{2}=\frac{7}{24},$ $a_{3}=a_{4}=\frac{5}{24}$), then the vector in Eq.~\eqref{***114a32a12a3} degenerates to
\begin{equation}\label{}
(*,*,*,*,*,*,*,1,1,*,*,*,1,*,*)
\end{equation}
and $f(0110)=f(0101)=f(1011)=1$.
\end{itemize}
\item [(2)]~If $2(a_{2}+a_{4})\neq1$ and $2(a_{3}+a_{4})=1$, then $a_{2}=a_{4}$ and the vector in Eq.~\eqref{case612a2a42a3a4} degenerates to
\begin{equation}\label{*******112a1a2a4}
(*,*,*,*,*,*,*,1,*,1,*,2(a_{1}+a_{2}+a_{4}),*,*,*).
\end{equation}
Set $2(a_{1}+a_{2}+a_{4})=1$ (For example, $a_{1}=\frac{1}{12},$ $a_{2}=\frac{5}{24}=a_{4},$ $a_{3}=\frac{7}{24}$), then the vector in Eq.~\eqref{*******112a1a2a4} degenerates to
$(*,*,*,*,*,*,*,1,*,1,*,1,*,*,*)$
and $f(0110)=f(0011)=f(1101)=1$.
\item [(3)]~If $2(a_{2}+a_{4}),$ $2(a_{3}+a_{4})\neq1$ and $2(a_{1}+a_{2}+a_{4})=1$, then the vector in Eq.~\eqref{case612a2a42a3a4} degenerates to
\begin{equation}\label{****112a1a3a4}
(*,*,*,*,*,*,*,1,*,*,*,1,2(a_{1}+a_{3}+a_{4}),*,*).
\end{equation}
Set $2(a_{1}+a_{3}+a_{4})=1$ (i.e., $a_{2}=a_{3}=a_{1}+a_{4}=\frac{1}{4}$), then the vector in Eq.~\eqref{****112a1a3a4} degenerates to
$(*,*,*,*,*,*,*,1,*,*,*,1,1,*,*)$
and $f(0110)=f(1101)=f(1011)=1$.
\end{itemize}

{\bf Case 7:} If $2a_{1},$ $2a_{2},$ $2a_{3},$ $2a_{4},$ $2(a_{1}+a_{2}),$ $2(a_{1}+a_{3}),$ $2(a_{1}+a_{4}),$ $2(a_{2}+a_{3})\neq1,$ $2(a_{2}+a_{4})=1$ and $a_{1},$ $a_{2},$ $a_{3},$ $a_{4}>0$ satisfying $\sum^{4}_{i=1}a_{i}\leq 1$, then
the undetermined 4-bit partial Boolean function $f$ represented by Eq.~\eqref{p1000p0100} corresponds to the vector
\begin{equation}\label{******12a3a42a1a2a3*}
\begin{split}
(*,*,*,*,*,*,*,*,1,2(a_{3}+a_{4}),2(a_{1}+a_{2}+a_{3}),
*,2(a_{1}+a_{3}+a_{4}),*,*).
\end{split}
\end{equation}
Next, we discuss Eq.~\eqref{******12a3a42a1a2a3*} one by one.
\begin{itemize}
\item [(1)]~If $2(a_{3}+a_{4})=1$, then $a_{2}=a_{3}$ and the vector in Eq.~\eqref{******12a3a42a1a2a3*} degenerates to
\begin{equation}\label{******112a1a2a3***}
(*,*,*,*,*,*,*,*,1,1,2(a_{1}+a_{2}+a_{3}),*,*,*,*).
\end{equation}
Set $2(a_{1}+a_{2}+a_{3})=1$ (For example, $a_{1}=\frac{1}{12},$ $a_{2}=a_{3}=\frac{5}{24},$ $a_{4}=\frac{7}{24}$), then the vector in Eq.~\eqref{******112a1a2a3***} degenerates to
$(*,*,*,*,*,*,*,*,1,1,1,*,*,*,*)$
and $f(0101)=f(0011)=f(1110)=1$.
\item [(2)]~If $2(a_{3}+a_{4})\neq1$ and $2(a_{1}+a_{2}+a_{3})=1$, then $a_{4}=a_{1}+a_{3}$ and the vector in Eq.~\eqref{******12a3a42a1a2a3*} degenerates to
\begin{equation}\label{*****1*1*2a1a3a4}
(*,*,*,*,*,*,*,*,1,*,1,*,2(a_{1}+a_{3}+a_{4}),*,*).
\end{equation}
Set $2(a_{1}+a_{3}+a_{4})=1$ (i.e., $a_{2}=a_{4}=\frac{1}{4}=a_{1}+a_{3}$), then the vector in Eq.~\eqref{*****1*1*2a1a3a4} degenerates to
$(*,*,*,*,*,*,*,*,1,*,1,*,1,*,*)$
and $f(0101)=f(1110)=f(1011)=1$.
\end{itemize}

{\bf Case 8:} If $2a_{1},$ $2a_{2},$ $2a_{3},$ $2a_{4},$ $2(a_{1}+a_{2}),$ $2(a_{1}+a_{3}),$ $2(a_{1}+a_{4}),$ $2(a_{2}+a_{3}),$ $2(a_{2}+a_{4})\neq1,$ $2(a_{3}+a_{4})=1$ and $a_{1},$ $a_{2},$ $a_{3},$ $a_{4}>0$ satisfying $\sum^{4}_{i=1}a_{i}\leq 1$, then the undetermined 4-bit partial Boolean function $f$ represented by Eq.~\eqref{p1000p0100} corresponds to the vector
\begin{equation}\label{******12a1a2a32a1a2a4***}
\begin{split}
(*,*,*,*,*,*,*,*,*,1,2(a_{1}+a_{2}+a_{3}),
2(a_{1}+a_{2}+a_{4}),*,*,*).
\end{split}
\end{equation}
Set $2(a_{1}+a_{2}+a_{3})=2(a_{1}+a_{2}+a_{4})=1$ (i.e., $a_{3}=a_{4}=a_{1}+a_{2}=\frac{1}{4}$), then the vector in Eq.~\eqref{******12a1a2a32a1a2a4***} degenerates to
$(*,*,*,*,*,*,*,*,*,1,1,1,*,*,*)$
and $f(0011)=f(1110)=f(1101)=1$.

{\bf Case 9:} If $2a_{1},$ $2a_{2},$ $2a_{3},$ $2a_{4},$ $2(a_{1}+a_{2}),$ $2(a_{1}+a_{3}),$ $2(a_{1}+a_{4}),$ $2(a_{2}+a_{3}),$ $2(a_{2}+a_{4}),$ $2(a_{3}+a_{4})\neq1,$ $2(a_{1}+a_{2}+a_{3})=1$ and $a_{1},$ $a_{2},$ $a_{3}$ $a_{4}>0$ satisfying $\sum^{4}_{i=1}a_{i}\leq 1$, then the undetermined 4-bit partial Boolean function $f$ represented by Eq.~\eqref{p1000p0100} corresponds to the vector
\begin{equation}\label{************12a1a2a42a1a3a42}
\begin{split}
(*,*,*,*,*,*,*,*,*,*,1,2(a_{1}{+}a_{2}{+}a_{4}),
2(a_{1}{+}a_{3}{+}a_{4}),2(a_{2}{+}a_{3}{+}a_{4}),*).
\end{split}
\end{equation}
Set $2(a_{1}+a_{2}+a_{4})=2(a_{1}+a_{3}+a_{4})=2(a_{2}+a_{3}+a_{4})=1$ (i.e., $a_{1}=a_{2}=a_{3}=a_{4}=\frac{1}{6}$), then the vector in Eq.~\eqref{************12a1a2a42a1a3a42} degenerates to
\begin{equation}\label{}
(*,*,*,*,*,*,*,*,*,*,1,1,1,1,*)
\end{equation}
and $f(1110)=f(1101)=f(1011)=f(0111)=1$.

{\bf Case 10:} If $2a_{1},$ $2a_{2},$ $2a_{3},$ $2a_{4},$ $2(a_{1}+a_{2}),$ $2(a_{1}+a_{3}),$ $2(a_{1}+a_{4}),$ $2(a_{2}+a_{3}),$ $2(a_{2}+a_{4}),$ $2(a_{3}+a_{4}),$ $2(a_{1}+a_{2}+a_{3}),$ $2(a_{1}+a_{2}+a_{4}),$ $2(a_{1}+a_{3}+a_{4}),$ $2(a_{2}+a_{3}+a_{4})\neq1,$ $2(a_{1}+a_{2}+a_{3}+a_{4})=1$ (For example, $a_{1}=a_{2}=a_{3}=a_{4}=\frac{1}{8}$), then the undetermined 4-bit partial Boolean function $f$ represented by Eq.~\eqref{p1000p0100} corresponds to the vector
$(*,*,*,*,*,*,*,*,*,*,*,*,*,*,1)$
and $f(1111)=1$.

As above, all 4-bit partial Boolean functions computed by exact quantum 1-query algorithms have been checked and by virtue of Table \ref{Tab04} the proof is completed.
\end{proof}

\section{Applications of Theorem \ref{Constructiontheorem}}\label{Applicationoftheconstructionmethod}
In this section, we present two applications of the Construction theorem (i.e., Theorem \ref{Constructiontheorem}).
\subsection{Finding new non-trivial examples}\label{somenewexamples}

In 2021, Ref. \cite{Qiuxu2021SPIN} presents a partial Boolean function with $Q_{E}(f)=1$ which cannot be computed exactly by Deutsch-Jozsa algorithm. In this subsection  we point out that this task can be done by Theorem \ref{Constructiontheorem}, as well.

Now, we present an example (different from the example of Ref. \cite{Qiuxu2021SPIN}) in the following.
In Definition \ref{def000}, take $n=3k$, $z_{1}=\cdots=z_{k}=\frac{1}{n}$ and $z_{k+1}=\cdots=z_{n}=\frac{1}{2n}$. According to Theorem \ref{Constructiontheorem}, we have
\begin{equation}\label{}
\sum_{i:x_{i}=1}z_{i}{=}\frac{1}{n}\lvert x_{1}\cdots x_{k}\rvert{+}\frac{1}{2n}\lvert x_{k+1}\cdots x_{3k}\rvert{=}\frac{1}{2}.
\end{equation}
If an undetermined $n$-bit partial Boolean function $f: D\rightarrow\{0,1\}$ can be computed exactly by the quantum 1-query algorithm, then the equation
\begin{equation}\label{theequality1111111}
2\lvert x_{1}\cdots x_{k}\rvert+\lvert x_{k+1}\cdots x_{3k}\rvert=3k
\end{equation}
holds for all $x\in D\backslash\{00\cdots 0\}$.
\begin{itemize}
\item [(1)] For $k=1$, the 3-bit partial Boolean function
\begin{equation}\label{example1111111111}
f(x)=\left\{
\begin{array}{lcl}
1, & & \lvert x_{1}\rvert=\lvert x_{2}x_{3}\rvert=1,\\
0, & & x=00...0,\\
*, & & \textrm{else}
\end{array} \right.
\end{equation}
depends on the first bit and is trivial.
\item [(2)] For $k=2$, $S$ becomes $\{x:\lvert x_{1}x_{2}\rvert=1~\textrm{and}~\lvert x_{3}x_{4}x_{5}x_{6}\rvert=4,~\textrm{or}~\lvert x_{1}x_{2}\rvert=2~\textrm{and}~\lvert x_{3}x_{4}x_{5}x_{6}\rvert=2\}$.
Then, the 6-bit partial Boolean function
\begin{equation}\label{example2222222222}
f(x)=\left\{
\begin{array}{lcl}
1, & & x\in S,\\
0, & & x=00...0,\\
*, & & \textrm{else}
\end{array} \right.
\end{equation}
is non-trivial.
\item [(3)] For $k=3$, $S$ becomes $\{x:\lvert x_{1}x_{2}x_{3}\rvert=2~\textrm{and}~\lvert x_{4}x_{5}\cdots x_{9}\rvert=5,~\textrm{or}~\lvert x_{1}x_{2}x_{3}\rvert=3~\textrm{and}~\lvert x_{4}x_{5}\cdots x_{9}\rvert=3\}.$ Then, the 9-bit partial Boolean function
\begin{equation}\label{example3333333333}
f(x)=\left\{
\begin{array}{lcl}
1, & &x\in S,\\
0, & & x=00...0,\\
*, & & \textrm{else}
\end{array} \right.
\end{equation}
is non-trivial.
\end{itemize}
As above, Theorem \ref{Constructiontheorem} contributes an efficient method for discovering more new partial Boolean functions with $Q_{E}(f)=1$.

\subsection{The computational power of Deutsch-Jozsa algorithm}\label{DJcomputesnoo}

In 2016, Qiu et al. \cite{Qiu2016revisit1,Qiu2016character} proved that any symmetric partial Boolean function $f$ has exact quantum 1-query complexity if and only if $f$ can be computed exactly by Deutsch-Jozsa algorithm. In other word, Deutsch-Jozsa algorithm is available for all symmetric partial Boolean functions with $Q_{E}(f)=1$. Thus, an interesting and natural question is what other partial Boolean functions can be computed exactly by Deutsch-Jozsa algorithm? In this subsection, we answer this problem in terms of the following theorem.
\begin{theorem}\label{djpower}
All partial Boolean functions computed by Deutsch-Jozsa algorithm can be transformed (up to the equivalence in Definition \ref{def111000}) to a symmetric partial Boolean function with $Q_{E}(f)=1$.
\end{theorem}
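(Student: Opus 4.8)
The plan is to use the identification made just after Theorem~\ref{Constructiontheorem}: the Deutsch--Jozsa algorithm is the $(z_1,\dots,z_n)$ quantum $1$-query algorithm with $p=1$, $a_1=\tfrac1n$, $k_1=n$, i.e.\ with $z_1=z_2=\cdots=z_n=\tfrac1n$. Applying the Construction theorem with these parameters (legitimate since $a_1k_1=1\in[\tfrac12,1]$), this algorithm computes the partial Boolean function $D_n$ whose $0$-set is $\{00\cdots0\}$ and whose $1$-set is $\{x:|x|=\tfrac n2\}$; this is the reduced D--J problem recalled in Section~\ref{preli} (and is item~(1) in the list after Theorem~\ref{Reduceddecisiontheorem}). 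Since $D_n$ must be non-constant we may take $n$ even; for odd $n$ the set $\{x:|x|=\tfrac n2\}$ is empty, so the D--J algorithm computes only constant (trivial) functions and the statement is vacuous.

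Given an arbitrary partial Boolean function $f$ computed by the D--J algorithm, the next step is to pass, via Reduction law~2 (Corollary~\ref{Reductionlaw2}), to its reduced form $\widetilde f$, which takes value $0$ exactly at $00\cdots0$ and is computed by the same algorithm. Evaluating the Reduced decision theorem (Theorem~\ref{Reduceddecisiontheorem}) on the solution $z_i=\tfrac1n$ forces $\sum_{i:x_i=1}\tfrac1n=\tfrac12$, that is $|x|=\tfrac n2$, for every $x$ with $\widetilde f(x)=1$; hence $\{a:\widetilde f(a)=1\}\subseteq\{x\in\{0,1\}^n:|x|=\tfrac n2\}$. On the other hand, $\{a\oplus b:D_n(a)\neq D_n(b)\}=\{00\cdots0\oplus b:|b|=\tfrac n2\}=\{x:|x|=\tfrac n2\}$, so the $1$-set of $\widetilde f$ is a subset of the difference set of $D_n$. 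Reduction law~1 (Corollary~\ref{Reductionlaw1}), applied to $D_n$ with $S=\{a:\widetilde f(a)=1\}$, therefore exhibits $\widetilde f$ as a reduction of $D_n$; composing with Reduction law~2 (and, in the case where the D--J algorithm is reached only after the constant-$0$ padding of item~(2) after Theorem~\ref{Reduceddecisiontheorem}, with that padding equivalence too), the original $f$ is, up to all these equivalences, a reduction of $D_n$. As $D_n$ is symmetric and has exact quantum $1$-query complexity --- it is computed by the D--J algorithm~\cite{Deutsch1992rapidsolution}, cf.\ also~\cite{Qiu2016character,Qiu2016revisit1} --- the theorem follows; as a by-product, the reduced partial Boolean functions computed by D--J are exactly the reductions of $D_n$ under Reduction law~1.

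The step I expect to demand the most care is the first one: confirming within this paper's model that the D--J algorithm really is the uniform-weight $(z_1,\dots,z_n)$ algorithm and --- the actual crux --- that this uniformity forces every reduced D--J-computable function to live on the single Hamming level $|x|=\tfrac n2$, so that $D_n$ is genuinely the maximal partial Boolean function it computes. Once that is in place, the remainder is a routine chaining of Reduction laws~1 and~2 with the Reduced decision theorem, together with the elementary even/odd and constant-function case analysis.
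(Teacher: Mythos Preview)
Your proposal is correct and follows essentially the same route as the paper: identify Deutsch--Jozsa as the $p=1$ (all-equal-weight) case of Definition~\ref{def000}, apply the Construction theorem to see that any reduced function it computes has its $1$-set confined to a single Hamming level, and conclude that this is (a restriction of) a symmetric partial Boolean function with exact quantum $1$-query complexity. The only noteworthy difference is that the paper does not fix $a=1/n$ but allows $an\in[\tfrac12,1]$, so the relevant Hamming level $1/(2a)$ ranges over $[n/2,n]$; this absorbs in one stroke the odd-$n$ case and the zero-padding variants that you treat separately, and lands directly on the functions of item~(2) after Theorem~\ref{Reduceddecisiontheorem}. Your explicit chaining through Reduction laws~1 and~2 is a bit more detailed than the paper's direct appeal to Theorem~\ref{Constructiontheorem}, but the substance is the same.
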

\begin{proof}
First, Deutsch-Jozsa algorithm corresponds to the case that $p=1$ and $an\in[\frac{1}{2},1]$ in Definition \ref{def000} and Theorem \ref{Constructiontheorem}, since  Deutsch-Jozsa algorithm produces an equal superposition of all computational basis states, before its only query operator is performed. Then, the set
\begin{equation}\label{allpart000}
T=\{x:a\lvert x\rvert=\frac{1}{2}\}=\{x:\lvert x\rvert=\frac{1}{2a}\}.
\end{equation}
By Theorem \ref{Constructiontheorem}, partial Boolean functions (up to the equivalence in Definition \ref{def111000}) computed exactly by Deutsch-Jozsa algorithm are in the form
\begin{equation}\label{functiondeutschJ}
f(x)=\left\{
\begin{array}{lcl}
1~\textrm{or}~*, & & \lvert x\rvert=\frac{1}{2a},\\
0, & & x=00...0,\\
*, & & \textrm{else}
\end{array} \right.
\end{equation}
where the Hamming weight $\frac{1}{2a}$ is an integer in the set $\in\{1,2,3,\cdots,n\}$. Note that $an\in[\frac{1}{2},1]$, and thus $\frac{1}{2a}\in[\frac{n}{2},n]$. In fact, these functions in Eq.~\eqref{functiondeutschJ} are symmetric partial Boolean functions with $Q_{E}(f)=1$ and had been founded out by Qiu et al. \cite{Qiu2016revisit1,Qiu2016character}.
\end{proof}


\end{document}